\pgfplotsset{compat=1.13} 
\def\techreport{}
\title{Equilibria-based Probabilistic Model Checking for Concurrent Stochastic Games}
\author{Marta Kwiatkowska\inst{1} \and Gethin Norman\inst{2} \and David~Parker\inst{3} \and Gabriel Santos\inst{1}}
\institute{Department of Computing Science, University of Oxford, UK
\and
School of Computing Science, University of Glasgow, UK
\and School of Computer Science, University of Birmingham, UK}
\titlerunning{Equilibria-based Probabilistic Model Checking for CSGs}
\authorrunning{Kwiatkowska \and Norman \and Parker \and Santos}
\newcommand\scale[2]{\vstretch{#1}{\hstretch{#1}{#2}}}
\newcommand{\appref}[1]{Appendix~\ref{#1}}
\newcommand{\sectref}[1]{Section~\ref{#1}}
\newcommand{\figref}[1]{Figure~\ref{#1}}
\newcommand{\tabref}[1]{Table~\ref{#1}}
\newcommand{\egref}[1]{Example~\ref{#1}}
\newcommand{\eqnref}[1]{(\ref{#1})}
\newcommand{\propref}[1]{Proposition~\ref{#1}}
\newcommand{\lemref}[1]{Lemma~\ref{#1}}
\newcommand{\defref}[1]{Definition~\ref{#1}}
\newcommand{\assumref}[1]{Assumption~\ref{#1}}
\newcommand{\defdefref}[2]{Definitions~\ref{#1} and \ref{#2}}
\newtheorem{assumption}{Assumption}
\newcounter{exampcount}
\newenvironment{examp}
{\refstepcounter{exampcount}
\vskip6pt\noindent
{\bf Example \arabic{exampcount}.}}
{}
\newcommand{\startpara}[1]{{%
\vskip6pt\noindent
{\bf #1.}}}
\def\ra{{\rightarrow}}
\def\Nset{\mathbb{N}}
\def\Rset{\mathbb{R}}
\def\Eset{\mathbb{E}}
\def\Rsetgeq{\mathbb{R}_{\geq 0}}
\def\ra{\rightarrow} %
\def\rmdef{\,\stackrel{\mbox{\rm {\tiny def}}}{=}}
\newcommand{\true}{\mathtt{true}} %
\renewcommand{\leq}{\leqslant}
\renewcommand{\geq}{\geqslant}
\def\squareforqed{\hbox{\rlap{$\sqcap$}$\sqcup$}}
\def\qed{\ifmmode\squareforqed\else{\unskip\nobreak\hfil
\penalty50\hskip1em\null\nobreak\hfil\squareforqed
\parfillskip=0pt\finalhyphendemerits=0\endgraf}\fi}
\newcommand\game{{\sf G}}
\newcommand\nfgame{{\sf N}}
\newcommand\mgame{{\sf Z}}
\newcommand\sinit{{\bar{s}}}
\newcommand\dist{{\mathit{Dist}}}
\newcommand\Prob{{\mathit{Prob}}}
\newcommand\val{{\mathit{val}}}
\newcommand{\last}{\mathit{last}}
\newcommand{\ipaths}{\mathit{IPaths}}
\newcommand{\fpaths}{\mathit{FPaths}}
\def\AP{{\mathit{AP}}}
\def\sat{{\,\models\,}}
\def\Sat{{\mathit{Sat}}}
\newcommand{\coalition}[1]{\langle \! \langle {#1} \rangle \! \rangle}
\def\notsat{{\,\not\models\,}}
\def\next{{X\,}}
\def\until{{\ {\cal U}\ }}
\def\buntil{{\ {\cal U}^{\leq k}\ }}
\def\next{{\mathtt X\,}}
\def\until{{\ \mathtt{U}\ }}
\def\untilop{{\mathtt{U}}}
\def\buntil{{\ \mathtt{U}^{\leq k}\ }}
\def\buntilop{{\mathtt{U}^{\leq k}}}
\newcommand{\buntilp}[1]{{\ \mathtt{U}^{\leq #1}\, }}
\def\future{{\mathtt{F}\ }}
\def\futureop{{\mathtt{F}\,}}
\def\bfuture{{\mathtt{F}^{\leq k}\ }}
\def\sbfuture{{\mathtt{F}^{\scale{.75}{\leq k}} }}
\def\bfutureop{{\mathtt{F}^{\leq k}\,}}
\newcommand\bfuturep[1]{{\mathtt{F}^{\leq #1}\ }}
\newcommand{\scumul}[1]{\mathtt{C}^{#1}} %
\newcommand{\ap}{\mathsf{a}}
\newcommand{\sinstant}[1]{\mathtt{I}^{#1}} %
\newcommand\V{{\mathtt V}}
\newcommand\probopP{{\mathtt P}}
\newcommand\nashop[3]{\coalition{#1}_{\max #2}(#3)}
\newcommand\snashop[3]{\coalition{#1}_{\scale{.75}{\max #2}}(#3)}
\newcommand\probop[2]{\probopP_{#1}[\,{#2}\,]}
\newcommand\rewopR{{\mathtt R}}
\newcommand\rewop[3]{\rewopR^{#1}_{#2}[\,{#3}\,]}
\newcommand{\lab}{{\mathit{L}}}
\newcommand{\rew}{\mathit{rew}}
\begin{document}

\maketitle

\begin{abstract}
Probabilistic model checking for stochastic games enables formal verification of systems that comprise competing or collaborating entities operating in a stochastic environment. Despite good progress in the area, existing approaches focus on zero-sum goals and cannot reason about scenarios where entities are endowed with different objectives. In this paper, we propose probabilistic model checking techniques for concurrent stochastic games based on Nash equilibria. We extend the temporal logic rPATL (probabilistic alternating-time temporal logic with rewards) to allow reasoning about players with distinct quantitative goals, which capture either the probability of an event occurring or a reward measure. We present algorithms to synthesise strategies that are subgame perfect social welfare optimal Nash equilibria, i.e., where there is no incentive for any players to unilaterally change their strategy in any state of the game, whilst the combined probabilities or rewards are maximised. We implement our techniques in the PRISM-games tool and apply them to several case studies, including network protocols and robot navigation, showing the benefits compared to existing approaches.
\end{abstract}

\section{Introduction}

Probabilistic model checking is a technique for formally verifying
systems that exhibit uncertainty or feature randomisation.
Quantitative system requirements, %
which express, e.g., safety, reliability or performance, 
are formally specified in temporal logic.
These are then automatically checked against a probabilistic model,
such as a Markov chain, %
capturing the possible behaviour of the system being verified.
Closely related is strategy synthesis,
which uses probabilistic models with nondeterminism, for example Markov decision processes (MDPs),
to automatically generate policies or controllers which
guarantee that pre-specified system requirements are satisfied.
Thanks to mature tool support~\cite{KNP11,DJKV17},
the methods have been successfully applied to many domains,
from autonomous vehicles, to computer security, to task scheduling.

Stochastic games are a modelling formalism that incorporates probability,
nondeterminism and multiple players, who can compete or collaborate to achieve their goals.
A variety of verification algorithms for these models have been devised, e.g.,~\cite{Cha07b,Umm10,AHK07,AM04,CAH13}.
More recently, probabilistic model checking and strategy synthesis techniques
for stochastic games have been proposed~\cite{CFK+13b,BKW17,KNPS18,KKKW18}
and implemented in the PRISM-games tool~\cite{KPW18}.
This has allowed modelling and verification of stochastic games
to be used for a variety of non-trivial applications,
in which competitive or collaborative behaviour between entities is a crucial ingredient, including computer security and energy management.

Initial work in this direction focused on \emph{turn-based} stochastic games (TSGs),
where each state is controlled by a single player~\cite{CFK+13b},
and proposed the logic rPATL, an extension of the well known logic ATL~\cite{AHK02}. The logic can specify that a coalition of players is able to achieve a quantitative objective regarding the probability of an event’s occurrence or the expectation of a reward measure, regardless of the strategies of the other players. 
Recently~\cite{KNPS18}, this was extended to \emph{concurrent} stochastic games (CSGs), in which players make decisions simultaneously.
This allows more realistic modelling of interactive agents operating %
concurrently.
In another direction, \emph{multi-objective} model checking of TSGs~\cite{CFKSW13,BKW17}
enabled reasoning about coalitions aiming to satisfy a \emph{Boolean
combination} of objectives, regardless of the remaining players' behaviour.

A limitation of these approaches 
is that they focus on \emph{zero-sum} properties, in which a coalition aims to satisfy some requirement or to optimise some objective,
while the remaining players have the directly opposing goal.
In this paper, we consider CSGs in which two coalitions of players
have distinct quantitative objectives.
For this, we use the notion of subgame perfect \emph{Nash equilibria}~\cite{OR04}, 
i.e., scenarios in which it is not beneficial for any player to unilaterally change their strategy in any state.
Furthermore, amongst these, we consider \emph{social welfare} optimal equilibria,
which maximise the sum of the objectives of the players.

We propose an extension to rPATL
which allows reasoning about subgame perfect social welfare optimal Nash equilibria between two coalitions of players, with respect to probabilistic or reward objectives,
expressed using a variety of temporal operators.
We then give a model checking algorithm for the logic against CSGs
which employs a combination of backwards induction (for finite-horizon operators)
and value iteration (for infinite-horizon operators).
A key ingredient of the computation
is finding social welfare optimal Nash equilibria for bimatrix games, which we perform using labelled polytopes~\cite{LH64} and a reduction to SMT.
We implement our techniques as an extension of the PRISM-games~\cite{KPW18} model checker and develop a selection of case studies, including robot navigation, communication protocols and power control, to evaluate its performance and applicability. We show that we are able to synthesise strategies that outperform those derived using existing techniques.
\ifthenelse{\isundefined{\techreport}}{%
}{%
}

\startpara{Related Work}
Game-theoretic models are used in many contexts within verification, as summarised above.
In addition, the existence of and the complexity of finding Nash equilibria for stochastic
games are studied in~\cite{CMJ04,Umm10},
but without practical algorithms. In \cite{PPB15}, a learning-based algorithm for finding Nash equilibria for discounted properties of CSGs is presented and evaluated. Similarly,~\cite{LP17} studies Nash equilibria for discounted properties and introduces iterative algorithms for strategy synthesis. A theoretical framework for price-taking equilibria of CSGs is given in~\cite{AY17}, where players try to minimise their costs which include a price common to all players and dependent on the decisions of all players.
A notion of strong Nash equilibria for a restricted class of CSGs is formalised in~\cite{DDJ+18} and an approximation algorithm for checking the existence of such equilibria for discounted properties is introduced and evaluated. We also mention~\cite{BMS16}, which studies the existence of stochastic equilibria with imprecise deviations for CSGs and proposes a PSPACE algorithm to compute such equilibria. 

For non-stochastic games, model checking tools such as
PRALINE~\cite{BRE13}, EAGLE~\cite{TGW15} and EVE~\cite{GNP+18} support Nash equilibria, as does MCMAS-SLK~\cite{CLM+14} via strategy logic.
General purpose tools such as Gambit~\cite{GAMB} can compute a variety
of equilibria but, again, not for stochastic games.

\section{Preliminaries}

We first provide some background material on game theory and stochastic games.
We let $\dist(X)$ denote the set of probability distributions over set $X$.
\begin{definition}[Normal form game] 
A (finite, $n$-person) \emph{normal form game} (NFG) is a tuple $\nfgame = (N,A,u)$ where: $N=\{1,\dots,n\}$ is a finite set of players; $A = A_1 {\times} \cdots {\times} A_n$ and $A_i$ is a finite set of actions available to player $i \in N$; $u {=} (u_1,\dots,u_n)$ and $u_i \colon A \rightarrow \Rset$ is a utility function for player $i \in N$.
\end{definition}
For an NFG $\nfgame$, the players choose actions at the same time, where the choice for player $i \in N$ is over the action set $A_i$. When each player $i$ choose $a_i$, the utility received by player $j$ equals $u_j(a_1,\dots,a_n)$. A (mixed) strategy $\sigma_i$ for player $i$ is a distribution over its action set.
A \emph{strategy profile} $\sigma{=} (\sigma_1,\dots,\sigma_n)$ is a tuple of strategies for each player and the expected utility of player $i$ under $\sigma$ is:
\[ \begin{array}{c}
u_i(\sigma) \ \rmdef \ \sum_{(a_1,\dots,a_n) \in A} u_i(a_1,\dots,a_n) \cdot \left( \prod_{j=1}^n \sigma_j(a_j) \right) \, .
\end{array} \]
For profile $\sigma{=}(\sigma_1,\dots,\sigma_n)$ and player $i$ strategy $\sigma_i'$, we define the sequence $\sigma_{-i} = (\sigma_1,\dots,\sigma_{i-1},\sigma_{i+1},\dots,\sigma_n)$ and profile $\sigma_{-i}[\sigma_i'] = (\sigma_1,\dots,\sigma_{i-1},\sigma_i',\sigma_{i+1},\dots,\sigma_n)$. For player $i$ and strategy sequence $\sigma_{-i}$, a \emph{best response} for player $i$ to $\sigma_{-i}$ is a strategy $\sigma^\star_i$ for player $i$ such that $u_i(\sigma_{-i}[\sigma^\star_i]) \geq u_i(\sigma_{-i}[\sigma_i])$ for all strategies $\sigma_i$ of player $i$.
We now introduce the concept of \emph{Nash equilibria} and a particular variant called \emph{social welfare optimal}, which are equilibria that maximise the total utility, i.e.\ maximise the sum of players' individual utilities.
\begin{definition}[Nash equilibrium]\label{nash-def}
For NFG $\nfgame$, a strategy profile $\sigma^\star$ is a \emph{Nash equilibrium} (NE) if $\sigma_i^\star$ is a best response to $\sigma_{-i}^\star$ for all $i \in N$. 
Furthermore $\sigma^\star$ is a \emph{social welfare optimal NE} (SWNE) if  $u_1(\sigma^\star){+}\cdots{+}u_n(\sigma^\star)\geq u_1(\sigma){+} \cdots{+}u_n(\sigma)$ for all Nash equilibria $\sigma$ of $\nfgame$.
\end{definition}
A two-player NFG is \emph{constant-sum} if there exists $c \in \Rset$ such that $u_1(\alpha) {+} u_2(\alpha) = c$ for all $\alpha \in A$ and \emph{zero-sum} if $c{=}0$. %
For general two-player NFGs, 
we have a \emph{bimatrix game} which can be represented by two distinct matrices $\mgame_1, \mgame_2 \in \Rset^{l \times m}$ where $A_1 {=} \{a_1,\dots,a_l\}$, $A_2 {=} \{b_1,\dots,b_m\}$, $z^1_{ij} = u_1(a_i,b_j)$ and $z^2_{ij} = u_2(a_i,b_j)$. 

\begin{examp}\label{1-eg} We consider a stag hunt game~\cite{PSS+11} where, if players decide to cooperate, this can yield a large payoff, but, if the others do not, then the cooperating player gets nothing while the remaining players get a small payoff. A scenario with 3 players, where two form a coalition, yields a bimatrix game: 
\vspace*{-0.2cm}
\[
\mgame_1 = 
\bordermatrix{
 & b_0 & b_1 & b_2 \cr 
 a_0 & 2 & 2 & 2  \cr 
 a_1 & 0 & 4 & 6 
 }
\qquad
\mgame_2 = 
\bordermatrix{
 & b_0 & b_1 & b_2  \cr 
 a_0 & 4 & 2 & 0  \cr 
 a_1 & 4 & 6 & 9
 }
\]
where $a_0$ and $a_1$ represent player 1 not cooperating and cooperating respectively and $b_i$ that $i$ players in the coalition cooperate.
There are three Nash equilibria:
\begin{itemize}
\item player 1 and the coalition select $a_0$ and $b_0$, respectively with utilities $(2,4)$;
\item player 1 selects $a_0$ and $a_1$ with probabilities $5/9$ and $4/9$ and the coalition selects $b_0$ and $b_2$ with probabilities $2/3$ and $1/3$ with utilities $(2,4)$;
\item player 1 and the coalition select $a_1$ and $b_2$ respectively with utilities $(6,9)$.
\end{itemize}
For instance, in the first case, neither player 1 nor the coalition thinks the other will cooperate: the best they can do is act alone. 
The third is the only SWNE.
\end{examp}

\startpara{Concurrent stochastic games} 
In this paper, we use CSGs,
in which players repeatedly make simultaneous (probabilistic) choices that update the game state.
\vspace*{-10pt}
\begin{definition}[Concurrent stochastic game] A \emph{concurrent stochastic multi-player game} (CSG) is a tuple
$\game = (N, S, \bar{S}, A, \Delta, \delta, \AP, \lab)$ where:
\begin{itemize}
\item $N=\{1,\dots,n\}$ is a finite set of players;
\item $S$ is a finite set of states and $\bar{S} \subseteq S$ is a set of initial states;
\item $A = (A_1\cup\{\bot\}) {\times} \cdots {\times} (A_n\cup\{\bot\})$ where $A_i$ is a finite set of actions available to player $i \in N$ and $\bot$ is an idle action disjoint from the set $\cup_{i=1}^n A_i$;
\item $\Delta \colon S \rightarrow 2^{\cup_{i=1}^n A_i}$ is an action assignment function;
\item $\delta \colon S {\times} A \rightarrow \dist(S)$ is a probabilistic transition function;
\item $\AP$ is a set of atomic propositions and $\lab \colon S \rightarrow 2^{\AP}$ is a labelling function.
\end{itemize}
\end{definition}
A CSG $\game$ starts in an initial state $\sinit \in \bar{S}$ and, when in state $s$, each player $i \in N$ selects an action from its available actions $A_i(s)$ given by $\Delta(s) \cap A_i$ if this set is non-empty and $\{ \bot \}$ otherwise. Supposing player $i$ selects action $a_i$, the state of the game is updated according to the distribution $\delta(s,(a_1,\dots,a_n))$. We augment CSGs with \emph{reward structures} of the form $r=(r_A,r_S)$ where $r_A \colon S{\times}A \ra \Rsetgeq$ is an action reward function and $r_S \colon S \ra \Rsetgeq$ is a state reward function.
\begin{definition}[End component] An end component of a CSG $\game$ is a pair $(S',\delta')$ comprising a subset $S' \subseteq S$ of states and a partial probabilistic transition function $\delta' \colon S' {\times} A \rightarrow \dist(S)$ satisfying the following conditions: 
\begin{itemize}
    \item $(S',\delta')$ defines a sub-CSG of $\game$, i.e., for all $s' \in S'$ and $\alpha \in A$, if $\delta'(s',\alpha)$ is defined, then $\delta'(s',\alpha) {=}\delta(s',\alpha)$ and $\delta'(s',\alpha)(s){=} 0$  for all $s \in S{\setminus}S'$;
    \item the underlying graph of $(S',\delta')$ is strongly connected. 
\end{itemize}
It is \emph{non-terminal} if $\delta(s,\alpha)(s') {>} 0$ for some $s \in S'$, $\alpha \in A$ and $s' \in S {\setminus} S'$.
\end{definition}
A path of $\game$ represents a resolution of both the players' and probabilistic choices and is given by a sequence $\pi = s_0 \xrightarrow{\alpha_0}s_1 \xrightarrow{\alpha_1} \cdots$ such that $s_i \in S$, $\alpha_i = (a^i_1,\dots,a^i_n) \in A$, $a^i_j \in A_j(s_i)$ for $j \in N$ and $\delta(s_i,\alpha_i)(s_{i+1})>0$ for all $i \geq 0$. For a path $\pi$, the $(i{+}1)$th state is denoted $\pi(i)$, the  $(i{+}1)$th action $\pi[i]$, and if $\pi$ is finite, the final state by $\last(\pi)$. %
The sets of finite and infinite paths (starting in state $s$) are given by $\fpaths_\game$ and $\ipaths_\game$ ($\fpaths_{\game,s}$ and $\ipaths_{\game,s}$).

\startpara{CSG strategies and equilibria}
A \emph{strategy} for player $i$ in a CSG $\game$ resolves the player's choices. More precisely, it is a function $\sigma_i \colon \fpaths_{\game} \ra \dist(A_i \cup \{ \bot \})$ such that if $\sigma_i(\pi)(a_i){>}0$, then $a_i \in A_i(\last(\pi))$. We denote by $\Sigma^i_\game$ the set of strategies of player $i$. %

As for NFGs,
a \emph{strategy profile} for $\game$ is a tuple $\sigma {=} (\sigma_1,\dots,\sigma_{n})$ of strategies for all players and, for a player $i$ strategy $\sigma_i'$, we  define the sequence $\sigma_{-i}$ and profile $\sigma_{-i}[\sigma_i']$ in the same way.
For strategy profile $\sigma$ and state $s$, we let
$\ipaths^\sigma_{\game,s}$ denote the
infinite paths from $s$ under the choices of $\sigma$.
We can define a probability measure $\Prob^{\sigma}_{\game,s}$ over the infinite paths $\ipaths^{\sigma}_{\game,s}$~\cite{KSK76} and, for random variable $X \colon \ipaths_{\game} \rightarrow \Rsetgeq$, the expected value $\Eset^{\sigma}_{\game,s}(X)$ of $X$ in $s$ with respect to $\sigma$.

An \emph{objective} (or utility function) for player $i$ of $\game$
is a random variable $X_i \colon \ipaths_{\game} \rightarrow \Rsetgeq$.
This can encode, e.g., the probability or expected cumulative reward for reaching a target.
NE for CSGs can be defined as for NFGs.
Since our model checking algorithm is based on backwards induction~\cite{SW+01,NMK+44}, we restrict attention to \emph{sub-game perfect} NE~\cite{OR04}, which are NE in \emph{every state} of the CSG.
In addition, for infinite-horizon objectives, the existence of NE is an open problem~\cite{BMS14} so, for such objectives, we use $\varepsilon$-NE,
which exist for any $\varepsilon{>}0$. 
\begin{definition}[Subgame perfect $\varepsilon$-NE] For CSG $\game$ and $\varepsilon{>}0$, a strategy profile $\sigma^\star$ is a \emph{subgame perfect $\varepsilon$-Nash equilibrium}
for objectives $\langle X_i \rangle_{i \in N}$ if and only if\/ $\Eset^{\sigma^\star}_{\game,s}(X_i) \geq \sup_{\sigma_i \in \Sigma_i} \Eset^{\sigma^\star_{-i}[\sigma_i]}_{\game,s}(X_i) - \varepsilon$ for all $i \in N$ and $s \in S$.
\end{definition}
Social welfare optimal variants of these equilibria
(SWNEs and $\varepsilon$-SWNEs) are defined for CSGs as for NFGs above (see \defref{nash-def}).
\begin{figure}[t]
\centering
\begin{tikzpicture}[auto,every node/.style={scale=0.9},initial text={},->,>=latex,thick]
\node[state, inner sep=-1pt, minimum size=0pt] (s0) {$\begin{array}{c}1{,}0 \\ 1{,}0 \end{array}$};

\node[scale=0.02, left = 0.9 of s0] (nod1) {};
\draw[-] (s0) edge node [above]{$(t_1,t_2)$}(nod1);

\node[state, inner sep=-1pt, minimum size=0pt, above left=0.2 and 2.2 of s0] (s1) {$\begin{array}{c}0{,}1 \\ 0{,}1 \end{array}$};
\node[state, inner sep=-1pt, minimum size=0pt, below left=0.2 and 2.2 of s0] (s2) {$\begin{array}{c}0{,}0 \\ 0{,}0 \end{array}$};
\draw[->] (nod1) edge node[left,xshift=-0mm,yshift=-1mm] {$\;\;\;q_2$} (s1);
\draw[->] (nod1) edge node[left,xshift=0mm,yshift=1mm] {$\;\;\;1{-}q_2$} (s2);

\node[state, inner sep=-1pt, minimum size=0pt, above right=0.2 and 1.75 of s0] (s3) {$\begin{array}{c}0{,}1 \\ 1{,}0 \end{array}$};

\node[state, inner sep=-1pt, minimum size=0pt, below right=0.2 and 1.75 of s0] (s4) {$\begin{array}{c}1{,}0 \\ 0{,}1 \end{array}$};

\node[state, inner sep=-1pt, minimum size=0pt, right=4
of s0] (s5) {$\begin{array}{c}0,1 \\ 0,1 \end{array}$};

\draw[->] (s0) edge node [above,xshift=-2mm]{$(t_1,w_2)$}(s3);
\draw[->] (s0) edge node [below,xshift=-2mm]{$(w_1,t_2)$}(s4);

\draw[->] (s4) edge node [below,xshift=2mm,yshift=-1mm]{$(t_1,w_2)$}(s5);
\draw[->] (s3) edge node [above,xshift=2mm,yshift=1mm]{$(w_1,t_2)$}(s5);

\draw[->] (s0) edge [loop above] node [above] {$(w_1,w_2)$} ();
\draw[->] (s3) edge [loop above] node [right,xshift=2mm,yshift=-2mm] {$(w_1,w_2)$} ();
\draw[->] (s4) edge [loop above] node [right,xshift=2mm,yshift=-2mm] {$(w_1,w_2)$} ();

\draw[->] (s1) edge [loop left] node [left] {$(w_1,w_2)$} ();
\draw[->] (s2) edge [loop left] node [left] {$(w_1,w_2)$} ();
\draw[->] (s5) edge [loop right] node [right] {$(w_1,w_2)$} ();
\end{tikzpicture}
\vspace*{-0.3cm}
\caption{CSG model of a medium access control problem.}\label{eg-fig}
\vspace*{-0.5cm}
\end{figure}
\begin{examp}\label{2-eg}
In \cite{BRE13} a deterministic concurrent game is used to model medium access control. Two users with limited energy share a wireless channel and choose to transmit ($t$) or wait ($w$) and, if both transmit, the transmissions fail due to interference. We extend this to a CSG by assuming that transmissions succeed with probability $q_2$ if both transmit. \figref{eg-fig} presents a CSG where each user has energy for one transmission (the first value of tuples labelling states represents if a user has energy and the second if it has successfully transmitted).

If the objectives are to maximise the probability of a successful transmission, there are two SWNEs when one user waits for the other to transmit and then transmits. This means both successfully transmit. If the objectives are to maximise the probability of being one of the first to transmit, then there is only one SWNE corresponding to both immediately trying to transmit.
\end{examp}

\section{Extending rPATL with Nash Formulae}\label{logic-sec}

We now extend the logic rPATL,
previously proposed for zero-sum properties of both TSGs~\cite{KNPS18} and CSGs~\cite{KNPS18},
to allow the analysis of equilibria-based properties.
Since we are limited to considering $\varepsilon$-SWNE for infinite-horizon properties, we assume some $\varepsilon$ has been fixed in advance when considering such properties.  
\begin{definition}[Extended rPATL syntax]
The syntax of our extended version of {\rm rPATL} is given by the grammar:
\begin{eqnarray*}
\phi & \; := \; & \mathtt{true} \; \mid \; \ap \; \mid \; \neg \phi \; \mid \; \phi \wedge \phi \; \mid \; \coalition{C}\probop{\sim q}{\psi} \; \mid \; \coalition{C}\rewop{r}{\sim x}{\rho} \; \mid \;   \nashop{C{:}C'}{\sim x}{\theta} \\
\theta & \; := \; & \probop{}{\psi}{+}\probop{}{\psi} \ \mid \  \rewop{r}{}{\rho}{+}\rewop{r}{}{\rho}  \\
\psi & \; := \; & \next \phi \ \mid \ \phi \buntil \phi \ \mid \ \phi \until \phi \\
\rho & \; := \; &  \sinstant{=k} \ \mid \ \scumul{\leq k} \ \mid \ \future \phi
\end{eqnarray*}
where $\ap$ is an atomic proposition, $C$ and $C'$ are coalitions of players such that $C' {=} N {\setminus} C$, $\sim \,\in \{<, \leq, \geq, >\}$, $q \in [0, 1]$, $x \in \Rset$, $r$ is a reward structure and $k \in \Nset$. %
\end{definition}
The logic rPATL is a branching-time temporal logic that combines the probabilistic operator $\probopP$ of PCTL~\cite{HJ94}, PRISM's reward operator $\rewopR$~\cite{KNP11}, and the coalition operator $\coalition{C}$ of ATL~\cite{AHK02}.
The formula $\coalition{C}\probop{\geq q}{\psi}$
states that the coalition $C$ has strategies which, when followed, regardless of the strategies of $N\backslash C$, guarantee that the probability of satisfying path formula $\psi$ is at least $q$.
Such properties are inherently \emph{zero-sum} in nature as one coalition tries to maximise an objective (here the probability of $\psi$) and the other to minimise it.

We extend rPATL with the ability to reason about \emph{equilibria} through \emph{Nash formulae} of the form $\nashop{C{:}C'}{\sim x}{\theta}$. In addition to the usual state ($\phi$), path ($\psi$) and reward ($\rho$) formulae,
we distinguish \emph{non-zero sum} formulae ($\theta$), which comprise a sum of probability or reward objectives.
The formula $\nashop{C{:}C'}{\sim x}{\theta}$ is satisfied
if there exists a subgame perfect SWNE strategy profile between coalitions $C$ and $C'({=}N {\setminus} C)$ under which the \emph{sum} of the two objectives in $\theta$
is ${\sim} x$.
As is common for probabilistic temporal logics, we allow numerical queries of the form $\coalition{C{:}C'}_{\max=?}[\theta]$ which return the
sum of SWNE values.

For probabilistic objectives ($\theta{=}\probop{}{\psi^1}{+}\probop{}{\psi^2}$),
each $\psi^i$ can be a ``next'' ($\mathtt{X}$), ``bounded until'' ($\buntilop$) or ``until'' ($\untilop$) operator, with the usual equivalences such as $\future\phi \equiv \true\until\phi$.
For reward objectives ($\theta{=}\rewop{r_1}{}{\rho^  1}{+}\rewop{r_2}{}{\rho^2}$),
each $\rho^i$ refers to the expected reward with respect to reward structure $r_i$:
the instantaneous reward after $k$ steps ($\sinstant{=k}$); the reward accumulated over $k$ steps ($\scumul{\leq k}$); or the reward accumulated until a state satisfying $\phi$ is reached ($\future \phi$).
\begin{examp}
Recall the medium access control CSG of \egref{2-eg}.
Formula $\nashop{p_1{:}p_2}{\geq 2}{\probop{}{\future \mathsf{send}_1}{+}\probop{}{\future \mathsf{send}_2}}$ means players $p_1$ and $p_2$ send their packets with probability 1, while
$\nashop{p_1{:}p_2}{=?}{\probop{}{\neg\mathsf{send}_2 \! \until \! \mathsf{send}_1}{+}\probop{}{\neg \mathsf{send}_1 \! \until \! \mathsf{send}_2}}$ asks what is the sum of subgame perfect SWNE values when the objectives are to maximise the probability of being one of the first to successfully transmit.
\end{examp}
\vskip0.5em
\noindent Before we give the semantics, we define \emph{coalition games} which, given a CSG and coalition (set of players), reduce the CSG to a two-player CSG. Without loss of generality we assume the coalition of players is of the form $C = \{1,\dots,n'\}$.
\begin{definition}[Coalition game] For CSG $\game{=}(N, S, \bar{s}, A, \Delta, \delta, \AP, \lab)$ and coalition $C{=} \{1,\dots,n'\} \subseteq N$, the \emph{coalition game} $\game^C {=} ( \{1,2\}, S, \bar{s}, A^C, \Delta, \delta^C, \AP, \lab )$ is a two-player game where: $A^C = A^C_1 {\times} A^C_2$, $A^C_1 =(A_1\cup\{\bot\}) {\times} \cdots {\times} (A_{n'}\cup\{\bot\})$, $A^C_2 = (A_{n'+1}\cup\{\bot\}) {\times} \cdots {\times} (A_n\cup\{\bot\})$ and
for any $s \in S$, $a^C_1=(a_1,\dots,a_{n'}) \in A^C_1$ and $a^C_2=(a_{n'+1},\dots,a_n) \in A^C_2$ we have $\delta^C(s,(a^C_1,a^C_2))=\delta(s,(a_1,\dots,a_n))$.
\vskip6pt \noindent
Furthermore, for a reward structure $r$ of $\game$, by abuse of notation we use $r$ for the corresponding reward structure of $\game^C$ which is constructed similarly.
\end{definition}
\begin{definition}[Extended rPATL semantics]\label{sem-def}
The satisfaction relation $\sat$ of our rPATL extension is defined inductively on the structure of the formula.
The propositional logic fragment $(\mathtt{true}$, $\ap$, $\neg$, $\wedge)$
is defined in the usual way.
For temporal operators and a state $s \in S$ in CSG $\game$, we have:
\begin{eqnarray*}
&\!\!\!s \sat \coalition{C} \probop{\sim q}{\psi}  
\; \Leftrightarrow \;
\exists \sigma_1 \in \Sigma^1 . \, \forall \sigma_2 \in \Sigma^2 . \,
\Prob^{\sigma_1,\sigma_2}_{\game^C,s} \big\{ \pi \in \ipaths^{\sigma_1,\sigma_2}_{\game^C,s} \mid \pi \sat \psi \big\} \sim q \\
&\!\!\!s \sat \coalition{C} \rewop{r}{\sim x}{\rho}
\;  \Leftrightarrow  \; 
\exists \sigma_1 \in \Sigma^1 . \, \forall \sigma_2 \in \Sigma^2  . \, \Eset^{\sigma_1,\sigma_2}_{\game^C,s}[\rew(r,\rho)] \sim x \\
&\!\!\!s \sat \nashop{C{:}C'}{\sim x}{\theta}
\; \Leftrightarrow  \; 
\exists \sigma_1^\star \in \Sigma^1, \sigma_2^\star \in \Sigma^2 . \, \left( \Eset^{\sigma_1^\star,\sigma_2^\star}_{\game^{C},s}(X^\theta_1)+\Eset^{\sigma_1^\star,\sigma_2^\star}_{\game^{C},s}(X^\theta_2) \right) \sim x
\end{eqnarray*}
and $(\sigma_1^\star,\sigma_2^\star)$ is a subgame perfect SWNE$\,$\footnote{In the case of infinite-horizon properties, this is a subgame perfect $\varepsilon$-SWNE.} for the objectives $(X^\theta_1,X^\theta_2)$ in $\game^{C}$ where,
for $1{\leq}i{\leq}2$ and $\pi \in \ipaths_{\game^C,s}^{\sigma_1,\sigma_2}:$
\begin{eqnarray*}
X^{\probop{}{\psi^1}{+}\probop{}{\psi^2}}_i(\pi) & \ = \ & 1 \;\mbox{if $\pi \sat \psi^i$ and 0 otherwise} \\
X^{\rewop{r_{\scale{.75}{1}}}{}{\rho^1}{+}\rewop{r_{\scale{.75}{2}}}{}{\rho^2}}_i(\pi) & \ = \ &  \rew(r_i,\rho^i)(\pi) \\
\pi \sat \next \phi 
& \ \Leftrightarrow \ & 
\pi(1) \sat \phi \\
\pi \sat \phi_1 \buntil \phi_2 
& \ \Leftrightarrow \ &
\exists i \leq k . \, (\pi(i) \sat \phi_2 \wedge \forall j < i . \, \pi(j) \sat \phi_1 )
\\
\pi \sat \phi_1 \until \phi_2 
& \ \Leftrightarrow \ & 
\exists i \in \Nset . \, ( \pi(i) \sat \phi_2 \wedge \forall j < i  . \, \pi(j) \sat \phi_1 ) \\
\rew(r,\sinstant{=k})(\pi) & = & r_S(\pi(k)) \\
\rew(r,\scumul{\leq k})(\pi) & = & \mbox{$\sum_{i=0}^{k-1}$} \big( r_A(\pi(i),\pi[i])+r_S(\pi(i)) \big) \\
\rew(r,\future  \phi)(\pi) & = & \left\{ \begin{array}{cl}
\infty
& \mbox{if} \; \forall j \in \Nset . \, \pi(j) \notsat \phi \\
\mbox{$\sum_{i=0}^{k_\phi}$} \big( r_A(\pi(i),\pi[i])+r_S(\pi(i)) \big) & \mbox{otherwise}
\end{array} \right.
\end{eqnarray*}
and $k_\phi = \min \{ k{-}1 \mid \pi(k) \sat \phi \}$.
\end{definition}

\section{Model Checking CSGs against Nash Formulae}\label{mc-sect}

Since rPATL is a branching-time logic, the basic model checking algorithm works by recursively computing the set $\Sat(\phi)$ of states satisfying formula $\phi$
over the structure of $\phi$. So, to extend the existing rPATL model checking algorithm for CSGs~\cite{KNPS18} to the logic from \sectref{logic-sec}, we need only consider Nash formulae $\nashop{C{:}C'}{\sim x}{\theta}$. %
This requires computation of subgame perfect SWNE values of the objectives $(X^\theta_1,X^\theta_2)$ and a comparison of their sum to the threshold $x$. %

We first explain how we compute SWNE values in bimatrix games, then subgame perfect SWNE values for finite-horizon objectives and lastly approximate subgame perfect $\varepsilon$-SWNE values for infinite-horizon objectives. We also discuss how to synthesise SWNE profiles. 
Our algorithm requires the following assumption on CSGs, which can be checked using standard graph-based methods. Without this assumption the presented value iteration algorithms are not guaranteed to converge 
\ifthenelse{\isundefined{\techreport}}{%
(for further details, see~\cite{extended}).
}{%
(see Appendices~\ref{2-app} and \ref{3-app} for further discussion of why this assumption is required).
}

\begin{assumption}\label{game-assum}
For any infinite-horizon probabilistic properties, there are no non-terminal end components.
For infinite-horizon reward properties, the targets are reached with probability 1 under all strategy profiles.
\end{assumption}

\startpara{Computing SWNE Values of Bimatrix Games} Finding Nash equilibria in bimatrix games is in the class of \emph{linear complementarity} problems (LCPs). More precisely, a profile $(\sigma_1,\sigma_2)$ is a Nash equilibrium of the bimatrix game $\mgame_1,\mgame_2 \in \Rset^{l \times m}$ where where $A_1 {=} \{a_1,\dots,a_l\}$, $A_2 {=} \{b_1,\dots,b_m\}$ if and only if there exists $u,v \in \Rset$ such that, for the column vectors $x \in \Rset^l$v and $y \in \Rset^m$ where $x_i {=} \sigma_1(a_i)$ and $y_j {=} \sigma_2(b_j)$ for $1 {\leq} i {\leq} l$ and $1 {\leq} j {\leq} m$, we have:
\[
x^T(\mathbf{1} u - \mgame_1 y) = 0, \;\;\;
y^T(\mathbf{1} v - \mgame_2^T x) = 0, \;\;\;
\mathbf{1} u -\mgame_1 y \geq \mathbf{0} , \;\;\;
\mathbf{1} v-\mgame_2^T x \geq \mathbf{0}
\]
and $\mathbf{0}$ and $\mathbf{1}$ are vectors or matrices with all components 0 and 1, respectively.

The Lemke-Howson algorithm~\cite{LH64} can be applied for finding Nash equilibria and is based on the method of \emph{labelled polytopes} \cite{NRTV07}. Other well-known methods include those based on \emph{support enumeration}~\cite{PNS04} and \emph{regret minimisation} \cite{SGC05}. %

\startpara{SWNE via Labelled Polytopes} Given a bimatrix game $\mgame_1,\mgame_2 \in \Rset^{l \times m}$, we denote the sets of deterministic strategies of players 1 and 2 by $I {=} \{1,\dots,l\}$ and $M {=} \{1,\dots,m\}$ and define $J {=} \{l{+}1,\dots,l{+}m\}$ by mapping $j \in M$
to $l{+}j \in J$. A \emph{label} is then defined as element of $I \cup J$. The sets of strategies for players 1 and 2 can be represented by:
\[
X = \{x \in \Rset^l \mid \mathbf{1} x = 1 \; \wedge \; x \geq \mathbf{0}\} \quad \mbox{and} \quad
Y = \{y \in \Rset^m \mid \mathbf{1} y = 1 \; \wedge \; y \geq \mathbf{0}\} \, .
\]
The strategy set $Y$ is then divided into regions $Y(i)$ and $Y(j)$ (polytopes) for $i \in I$ and $j \in J$ such that $Y(i)$ contains strategies for which the deterministic strategy $i$ of player 1 is a best response and $Y(j)$ contain strategies which choose action $j$ with probability zero:
\[
Y(i) = \{y \in Y \mid \forall k \in I . \; \mgame_1(i,:)y \geq \mgame_1(k,:)y \} \;\; \mbox{and} \; \; Y(j) = \{ y = Y \mid y_{j-l} = 0\}
\]
where $\mgame_1(i,:)$ is the $i$th row vector of $\mgame_1$. A vector $y$ is then said to have label $k$ if $y \in Y(k)$, for $k \in I \cup J$.
The strategy set $X$ is divided analogously into regions $X(j)$ and $X(i)$ for $j \in J$ and $i\in I$ and a vector $x$ has label $k$ if $x \in X(k)$, for $k \in I \cup J$.
A pair of vectors $(x,y) \in X {\times} Y$ is \emph{completely labelled} if the union of the labels of $x$ and $y$ equals $I \cup J$.
The Nash equilibria of the game equal the vector pairs that are completely labelled~\cite{LH64,SHP74}. 

Once all completely labelled vector pairs have been computed, one can calculate the corresponding set of values through matrix-vector multiplication. The pairs that maximise the sum of values correspond to SWNE strategies. In case of multiple SWNEs, we choose the values that are maximal for the first player, unless both players can get equal payoff, in which case we choose these.
\startpara{Computing Values of Nash Formulae}
For a formula $\nashop{C{:}C'}{\sim x}{\theta}$, if the objectives of the non-zero sum formula $\theta$ are both finite-horizon, we can use \emph{backwards induction}~\cite{SW+01,NMK+44} to compute (precise) subgame perfect SWNE values. Below, we give the cases for bounded probabilistic reachability and bounded cumulative reward objectives;
\ifthenelse{\isundefined{\techreport}}{%
the remaining cases can be found in~\cite{extended}.
}{%
for the remaining cases see \appref{mc-app}.
}
If both of the objectives are infinite-horizon,
we use \emph{value iteration}~\cite{CH08} to approximate subgame perfect SWNE values. Since there is not necessarily a unique pair of such values, the convergence criterion is applied to the sum of the two values computed, which \emph{is} unique.
Below, we give details for
probabilistic and expected reachability objectives;
\ifthenelse{\isundefined{\techreport}}{%
the remaining cases can be found in~\cite{extended}.
}{%
for the remaining cases see \appref{mc-app}.
}
Finally, for cases where there is a combination of finite- and infinite-horizon objectives,
we convert to having both infinite-horizon by modifying the game and formula in a standard manner for probabilistic model checking;
\ifthenelse{\isundefined{\techreport}}{%
see~\cite{extended} for the construction.
}{%
see \appref{mc-app} for the construction.
} The two key aspects of the value iteration algorithm are using SWNE to ensure uniqueness and solving an MDP when the target of one player has been reached.

We use the notation $\V_{\game^C}(s,\theta)$ for SWNE values of the objectives $(X^\theta_1,X^\theta_2)$ in state $s$ of $\game^C$. We also use $\probopP^{\max}_{\game,s}(\psi)$ and $\rewopR^{\max}_{\game,s}(r,\rho)$ for the maximum probability of satisfying $\psi$ and maximum expected reward for the random variable $\rew(r,\rho)$, respectively, in state $s$ when all players collaborate. These can be computed through standard MDP model checking~\cite{BdA95,dA99}.
\startpara{Bounded Probabilistic Reachability} If $\theta {=} \probop{}{\bfuturep{k_1} \phi^1}{+}\probop{}{\bfuturep{k_2} \phi^2}$, then we compute values of the objectives for the formulae $\theta_{n+n_1,n+n_2}=\probop{}{\bfuturep{n+n_1} \phi^1}+\probop{}{\bfuturep{n+n_2} \phi^2}$ for $0{\leq} n {\leq}k$ recursively, where $k{=}\min\{k_1,k_2\}$, $n_1 {=} k_1{-}k$ and $n_2 = k_2{-}k$. For state $s$, if $n{=}0$:
\[
\V_{\game^C}(s,\theta_{n_1,n_2}) = \left\{ \begin{array}{cl}
(\eta_{\phi^1}(s),\eta_{\phi^2}(s)) & \;\; \mbox{if $n_1{=}n_2{=}0$} \\
(\eta_{\phi^1}(s),\probopP^{\max}_{\game,s}(\bfuturep{n_2} \phi^2) & \;\; \mbox{else if $n_1{=}0$} \\
(\probopP^{\max}_{\game,s}(\bfuturep{n_1} \phi^1),\eta_{\phi^2}(s)) & \;\; \mbox{otherwise}
\end{array}\right.
\]
and if $n{>}0$:
\[
\V_{\game^C}(s,\theta_{n+n_1,n+n_2}) = \left\{ \begin{array}{cl}
(1,1) & \;\; \mbox{if $s \in \Sat(\phi^1) \cap \Sat(\phi^2)$} \\
(1,\probopP^{\max}_{\game,s}(\bfuturep{n+n_2} \phi^2)) & \;\; \mbox{else if $s \in \Sat(\phi^1)$} \\
(\probopP^{\max}_{\game,s}(\bfuturep{n+n_1} \phi^1),1) & \;\; \mbox{else if $s \in \Sat(\phi^2)$} \\
\val(\mgame_1, \mgame_2) & \;\; \mbox{otherwise}
\end{array} \right. 
\]
where $\eta_{\phi^i}(s)$ equals $1$ if $s \in \Sat(\phi^i)$ and $0$ otherwise for $1 {\leq} i {\leq} 2$, and $\val(\mgame_1, \mgame_2)$ equals SWNE values of the bimatrix game $(\mgame_1,\mgame_2)\in \Rset^{l \times m}$:
\[ \begin{array}{c}
z^l_{i,j} = \sum_{s' \in S} \delta(s,(a_i,b_j))(s') \cdot v^{s',l}_{(n-1)+n_l}
\end{array} \]
$1{\leq}l{\leq}2$ and $(v^{s',1}_{(n-1)+n_1},v^{s',2}_{(n-1)+n_2}) = \V_{\game^C}(s',\theta_{(n-1)+n_1,(n-1)+n_2})$ for all $s' \in S$.

\startpara{Bounded Cumulative Rewards} If $\theta = \rewop{r_1}{}{\scumul{\leq k_1}}{+}\rewop{r_2}{}{\scumul{\leq k_2}}$, then we compute values of the objectives for the formulae $\theta_{n+n_1,n+n_2}=\rewop{r_1}{}{\scumul{\leq n+n_1}}+\rewop{r_2}{}{\scumul{\leq n+n_2}}$ for $0{\leq} n {\leq}k$ recursively,  where $k{=}\min\{k_1,k_2\}$, $n_1 {=} k_1{-}k$ and $n_2 = k_2{-}k$. For state $s$, if $n{=}0$:
\[
\V_{\game^C}(s,\theta_{n_1,n_2}) = 
\left\{ \begin{array}{cl}
(0,0) & \;\; \mbox{if $n_1{=}n_2{=}0$} \\
(0,\rewopR^{\max}_{\game,s}(r_2,\scumul{\leq n_2})) & \;\; \mbox{else if $n_1{=}0$} \\
(\rewopR^{\max}_{\game,s}(r_1,\scumul{\leq n_1}),0) & \;\; \mbox{otherwise}
\end{array} \right.
\]
and if $n{>}0$, then  $\V_{\game^C}(s,\theta_{n+n_1,n+n_2})$ equals SWNE values of the bimatrix game $(\mgame_1, \mgame_2) \in \Rset^{l \times m}$: 
\[ \begin{array}{c}
z^l_{i,j} = r^l_S(s) + r^l_A(s,(a_i,b_j)) + \sum_{s' \in S} \delta(s,(a_i,b_j))(s') \cdot v^{s',l}_{(n-1)+n_l} 
\end{array} \]
$1{\leq}l{\leq}2$ and $(v^{s',1}_{(n-1)+n_1},v^{s',2}_{(n-1)+n_2}) = \V_{\game^C}(s',\theta_{(n-1)+n_1,(n-1)+n_2})$ for all $s' \in S$.

\startpara{Probabilistic Reachability}
If $\theta = \probop{}{\future \phi^1}{+}\probop{}{\future \phi^2}$, values can be computed through value iteration as the limit $\V_{\game^C}(s,\theta) = \lim_{n \ra \infty} \V_{\game^C}(s,\theta,n)$ where:
\[
\V_{\game^C}(s,\theta,n) = \left\{ \begin{array}{cl}
(1,1) & \;\; \mbox{if $s \in \Sat(\phi^1) \cap \Sat(\phi^2)$} \\
(1,\probopP^{\max}_{\game,s}(\future \phi^2)) & \;\; \mbox{else if $s \in \Sat(\phi^1)$} \\
(\probopP^{\max}_{\game,s}(\future \phi^1),1) & \;\; \mbox{else if $s \in \Sat(\phi^2)$} \\
(0,0) & \;\; \mbox{else if $n{=}0$} \\
\val(\mgame_1, \mgame_2) & \;\; \mbox{otherwise}
\end{array} \right. 
\]
where $\val(\mgame_1, \mgame_2)$ equals SWNE values of the bimatrix game $(\mgame_1,\mgame_2)\in \Rset^{l \times m}$:
\[ \begin{array}{c}
z^l_{i,j} = \sum_{s' \in S} \delta(s,(a_i,b_j))(s') \cdot v^{s',l}_{n-1} 
\end{array} \]
$1{\leq}l{\leq}2$ and $(v^{s',1}_{n-1},v^{s',2}_{n-1}) = \V_{\game^C}(s',\theta,n{-}1)$ for all $s' \in S$. 
\startpara{Expected Reachability} If $\theta = \rewop{r_1}{}{\future \phi^1}{+}\rewop{r_2}{}{\future \phi^2}$, values can be computed through value iteration as the limit 
$\V_{\game^C}(s,\theta) = \lim_{n \ra \infty} \V_{\game^C}(s,\theta,n)$ where:
\[
\V_{\game^C}(s,\theta,n) = \left\{ \begin{array}{cl}
(0,0) & \;\; \mbox{if $s \in \Sat(\phi^1) \cap \Sat(\phi^2)$ or $n{=}0$} \\
(0,\rewopR^{\max}_{\game,s}(r_2,\future \phi^2)) & \;\; \mbox{else if $s \in \Sat(\phi^1)$} \\
(\rewopR^{\max}_{\game,s}(r_1,\future \phi^1),0) & \;\; \mbox{else if $s \in \Sat(\phi^2)$} \\
\val(\mgame_1, \mgame_2) & \;\; \mbox{otherwise}
\end{array} \right. 
\]
where $\val(\mgame_1, \mgame_2)$ equals SWNE values of the bimatrix game $(\mgame_1,\mgame_2)\in \Rset^{l \times m}$:
\[ \begin{array}{c}
z^l_{i,j} = r^l_S(s) + r^l_A(s,(a_i,b_j)) + \sum_{s' \in S} \delta(s,(a_i,b_j))(s') \cdot v^{s',l}_{n-1} 
\end{array} \]
$1{\leq}l{\leq}2$ and $(v^{s',1}_{n-1},v^{s',2}_{n-1}) = \V_{\game^C}(s',\theta,n{-}1)$ for all $s' \in S$. 
\startpara{Strategy Synthesis} 
In addition to property verification, it is usually beneficial to perform \emph{strategy synthesis}, that is, construct a witness of the satisfaction of a property. In the case of a  formula $\nashop{C{:}C'}{\sim x}{\theta}$, we can return a subgame perfect SWNE for the objectives $(X^\theta_1,X^\theta_2)$. This is achieved using the approach above, both keeping track of a SWNE for the bimatrix game solved in each state and, when computing optimal values for MDPs, also performing strategy synthesis~\cite{KP13} (a strategy of the MDP is equivalent to a strategy profile of the CSG). We can then combine these generated profiles to yield a subgame perfect SWNE. The synthesised strategies require randomisation and memory. Memory is needed since choices change after a path formulae becomes true or a target reached and is required for finite-horizon properties. For infinite-horizon properties, the use of value iteration means only approximate $\varepsilon$-NE profiles are synthesised. However, for the case studies in \sectref{case-sect}, we find that all synthesised profiles are NE.

\startpara{Correctness and Complexity}
The proof of correctness is given in
\ifthenelse{\isundefined{\techreport}}{%
the extended version of the paper~\cite{extended}
}{%
Appendix~\ref{appx-correctness}} and shows that the values computed during value iteration correspond to subgame perfect SWNE values of finite game trees, and the values of these game trees converge uniformly and are bounded from below and above by the finite approximations of $\game^C$ and actual values of $\game^C$, respectively. A limitation of our approach, as for standard value iteration~\cite{HM18,KKK+18}, is that convergence of the values  does not give guarantees on the precision. Complexity is linear in the size of the formula, while finding NE for reachability objectives is EXPTIME~\cite{Cha06}. Value iteration requires solving an LCP problem of size $|A|$ for each state at every iteration, with the number of iterations depending on the convergence criterion. \sectref{case-sect} reports on efficiency in practice.
\section{Implementation and Tool Support}

We have extended PRISM-games~\cite{KPW18} with support for modelling and verification of CSGs against equilibria-based properties, building upon the CSG extension of~\cite{KNPS18}.
The tool and files for the case studies of \sectref{case-sect} are available from~\cite{files}.

\startpara{Modelling}
CSGs are specified using an extension of the PRISM modelling language,
in which behaviour is defined using probabilistic guarded commands 
of the form ${[}a{]}\ g \rightarrow u$, where $a$ is an action label, $g$ is a guard (a predicate over states) and $u$ is a probabilistic state update. If it is enabled (i.e., $g$ is true), an $a$-labelled transition can probabilistically update the model's state.

This language is adapted to CSGs in~\cite{KNPS18} by assigning modules to players and, in any state, letting each player choose between enabled commands of the corresponding modules (if no command is enabled, the player idles). One requirement of~\cite{KNPS18} was that the updates of all player were independent of each other;
we extend the language to remove this requirement, by allowing commands to be labelled with lists of actions ${[}a_1,\dots,a_n{]}$, and thus represent behaviour dependent on other players' choices.
Rewards are extended similarly so that an individual player's rewards 
can depend on the choices taken by multiple players.

\startpara{Implementation}
We have implemented model construction of CSGs for the language described above, and the model checking and strategy synthesis algorithms of \sectref{mc-sect}, extending the PRISM-games implementation of rPATL verification~\cite{KNPS18}.
We build on PRISM's Java-based `explicit' engine which uses sparse matrices, and add an SMT-based implementation for solving bimatrix games using Z3~\cite{Z3}. The set of all Nash equilibria for a bimatrix game are found by progressively querying the SMT solver for new profiles until the model becomes unsatisfiable. Structuring the problem using labelled polytopes, which can be expressed through conjunctions, disjunctions and linear inequalities, avoids non-linear arithmetic. As an optimisation, we also search for and filter out \emph{dominated strategies} as a precomputation step to reduce the calls to the solver. 

\section{Case Studies and Experimental Results}\label{case-sect}

We now present case studies and results to demonstrate the applicability of our approach and implementation, as well as the benefits of using equilibria.

\startpara{Efficiency and Scalability} Before describing the case studies, we first discuss the performance of the implementation. In \tabref{tab:results}, we show experiments run on a 2.10 GHz Intel Xeon using 16GB RAM. The table includes model statistics (players, states and transitions) and the time to construct the CSG and verify it;
the latter is split between CSG verification (including solving the bimatrix games) and the instances of MDP verification.
Our tool can analyse models with over 2 million states and 20 million transitions; all are solved in under 2 hours and most are considerably quicker. However, for models where players have choices in almost all states, only models with up to tens of thousands of states can be verified within 2 hours.
The majority of the time is spent solving bimatrix games, and therefore it is the number of choices of each coalition, rather than the number of players, that affects performance. 

\begin{table}[t]
\centering
{\scriptsize
\begin{tabular}{|c|r||r|r|r|r||r|r|r|} \hline
\multicolumn{1}{|c|}{Case study \& property} & 
\multicolumn{1}{c||}{Param.} & 
\multicolumn{4}{c||}{CSG statistics} & \multicolumn{1}{c|}{Constr.} & \multicolumn{2}{c|}{Verif. time (s)} \\ 
\cline{3-6} \cline{8-9}
\multicolumn{1}{|c|}{[parameters]} &
\multicolumn{1}{c||}{values} & 
\multicolumn{1}{c|}{Players} & 
\multicolumn{1}{c|}{States} & 
\multicolumn{1}{c|}{Choices} & 
\multicolumn{1}{c||}{Trans.} & \multicolumn{1}{c|}{time(s)} & \multicolumn{1}{c|}{MDP} & \multicolumn{1}{c|}{CSG} \\ \hline \hline
\multirow{4}{*}{\shortstack[c]{\emph{Aloha} \\
$\probop{}{\futureop \mathit{sent}_1}{+}\probop{}{\futureop \mathit{sent}_{2,3}}$ \\
$\mbox{[$b_\mathit{max}$,$D$]}$}}
& 2,8 & 3 &    17,057 &    19,713 &     42,654 & 0.6 & 0.6  & 21.4 \\
& 3,8 & 3 &    89,114 &    97,326 &    264,172 & 2.2 & 2.1  & 32.8 \\
& 4,8 & 3 &   449,766 &   474,898 &  1,655,479 & 10.9 & 10.6  & 49.9 \\
& 5,8 & 3 & 2,308,349 & 2,385,537 & 10,362,711 & 97.7 & 90.0  & 121.7 \\
\hline\hline
\multirow{4}{*}{\shortstack[c]{\emph{Robot coordination} \\
$\probop{}{\bfutureop \!\! \mathit{goal}_1}{+}\probop{}{\bfutureop\!\! \mathit{goal}_2}$ \\
$\mbox{[$l$,$k$]}$}}
& 10,10 & 3 &   9,802 &    66,514 &    543,524 & 1.4 & 2.0 & 27.2  \\
& 15,15 & 3 &  50,177 &   375,549 &  3,175,539 & 5.0 & 19.8 & 131.8 \\
& 20,20 & 3 & 159,202 & 1,249,434 & 10,738,004 & 15.4 & 136.3 & 928.7 \\
& 25,25 & 3 & 389,377 & 3,142,669 & 27,267,419 & 48.3 & 548.8 & 4,837.0 \\
\hline\hline
\multirow{4}{*}{\shortstack[c]{\emph{Medium access control} \\ $\rewop{r_{\scale{.75}{1}}}{}{\scumul{\leq k}}{+}\rewop{r_{\scale{.75}{2}}}{}{\scumul{\leq k}}$ \\
$\mbox{[$e_\mathit{max}$,$\mathit{k}$]}$}}
& 10,20  & 2 &    441 &   1,600 &   2,759 & 0.1 & - & 17.2 \\
& 20,40  & 2 &  1,681 &   6,400 &  11,119 & 0.2 & - & 127.5 \\
& 40,80  & 2 &  6,561 &  25,600 &  44,639 & 0.7 & - & 991.7 \\
& 80,160 & 2 & 25,921 & 102,400 & 178,879 & 1.3 & - & 6,937.0 \\
\hline \hline
\multirow{4}{*}{\shortstack[c]{\emph{Power control} \\
$\rewop{r_{\scale{.75}{1}}}{}{\futureop e_1{=}0}{+}\rewop{r_{\scale{.75}{2}}}{}{\futureop e_2{=}0}$ \\
$\mbox{[$e_\mathit{max}$,$\mathit{k}$]}$}}
& 4,20 & 2 &  2,346 &   6,802 &  13,574 & 0.2 & 0.2 & 3.0 \\
& 4,40 & 2 & 10,746 &  30,700 &  60,854 & 0.4 & 1.0 & 12.7 \\
& 8,20 & 2 &  4,010 &  14,545 &  31,654 & 0.3 & 0.4 & 5.2 \\
& 8,40 & 2 & 32,812 & 119,694 & 260,924 & 1.2 & 3.9 & 64.8 \\
\hline
\end{tabular}}
\vspace*{0.1cm}
\caption{Statistics for a representative set of CSG verification instances.}\label{tab:results}
\vspace*{-0.9cm}
\end{table}

\startpara{Investigating the Benefits of Equilibria Properties}
In each case study, we compare our results with the corresponding zero-sum properties~\cite{KNPS18}. E.g., for $\nashop{C{:}C'}{=?}{\probop{}{\future \phi_1}{+}\probop{}{\future \phi_2}}$, we compute the value and an optimal strategy $\sigma_C$ for coalition $C$ of the formula $\coalition{C}\probop{\max=?}{\future \phi_1}$, and then find the value of an optimal strategy for the coalition $C'$ for $\probop{\min=?}{\future \phi_2}$ and $\probop{\max=?}{\future \phi_2}$ in the MDP induced by CSG when $C$ follows $\sigma_C$. 
The aim is to showcase the advantages of cooperation as, in many real-world applications, agents’ goals are not strictly opposed.
As will be seen, all the presented results demonstrate that by using equilibrium properties at least one of the players gains and in almost all cases neither player loses 
(in the one case study where this is not the case the gains far outweigh the losses). The individual SWNE values for players need not be unique and, for all case studies (except Aloha in which the players goals are not symmetric), the values can be swapped to give alternative SWNE values.

\startpara{Robot Coordination}
Our first case study models a scenario in which two robots move concurrently over a grid of size $l{\times}l$. The robots start in diagonally opposite corners and try to reach the corner from which the other starts. A robot can move either diagonally, horizontally or vertically towards its goal and when it moves there is a probability ($q$) that it instead moves in an adjacent direction. E.g., if a robot moves north east, then with probability $q/2$ it will move north or east. If the robots enter the same cell, they crash and are unable to move again.

\begin{figure}[t]
\centering
\begin{subfigure}{.49\textwidth}
\centering
\tiny{
\begin{tikzpicture}
\begin{axis}[
    title style={yshift=-2ex},
    title={$l{=}10$},
    ylabel={Sum of Probabilities},
    xlabel={$k$},
    xmin=9, xmax=14,
    xtick={9,10,11,12,13,14},
    ymin=0, ymax=2,
    ytick={0,0.25,0.5,0.75,1,1.25,1.5,1.75,2},
    legend pos=south east,
    legend style={fill=none},
    ymajorgrids=true,
    grid style=dashed,
    height=4.25cm,
    width=0.95\textwidth,
    legend entries={
                {$\snashop{p_{\scale{.75}{1}}{:}p_{\scale{.75}{2}}}{=?}{\mathtt{P}{+}\mathtt{P}}$},
                \textit{$\coalition{\mathit{p_{\scale{.75}{1}}}}{\mathtt P_{\scale{.75}{\max}}}+{\mathtt P_{\scale{.75}{\max}}}$},
             \textit{$\coalition{\mathit{p_{\scale{.75}{1}}}}{\mathtt P_{\scale{.75}{\max}}}+{\mathtt P_{\scale{.75}{\min}}}$}  
                }]
\addlegendimage{mark=square*,red,mark size=1.5pt}
\addlegendimage{mark=*,blue,mark size=1.5pt}
\addlegendimage{mark=triangle,orange,mark size=1.5pt}
]

\addplot[mark=square*,red,mark size=1.5pt] table [x=k, y=peq, col sep=comma]{rbts_crd_p_reach.csv};
\addplot[mark=*,blue,mark size=1.5pt] table [x=k, y=pmxmx, col sep=comma]{rbts_crd_p_reach.csv};
\addplot[mark=triangle,orange,mark size=1.5pt] table [x=k, y=pmxmn, col sep=comma]{rbts_crd_p_reach.csv};

\end{axis}
\end{tikzpicture}
}
\end{subfigure}
\begin{subfigure}{.49\textwidth}
\centering
\tiny{
\begin{tikzpicture}
\begin{axis}[
    title style={yshift=-2ex},
    title={player 1 (solid lines) and player 2 (dashed lines)},
    ylabel={Probability},
    xlabel={$k$},
    xmin=2, xmax=9,
    xtick={2,3,4,5,6,7,8,9,10},
    ymin=0, ymax=1,
    ytick={0,0.25,0.5,0.75,1},
    legend pos=south east,
    legend style={fill=none},
    ymajorgrids=true,
    grid style=dashed,
    height=4.25cm,
    width=0.95\textwidth,
    legend entries={
                $l{=}3$,
                $l{=}4$,
                $l{=}5$,
                $l{=}6$,      
                $l{=}7$                
                }]
\addlegendimage{mark=square*,red,mark size=1.5pt}
\addlegendimage{mark=o*,blue,mark size=1.5pt}
\addlegendimage{mark=triangle*,magenta,mark size=1.5pt}
\addlegendimage{mark=+,cyan,mark size=1.5pt}
\addlegendimage{mark=square,orange,mark size=1.5pt}
]

\addplot[mark=square*,red,mark size=1.5pt] table [x=k, y=p1_3, col sep=comma]{rbts_crd_p_reach2.csv};
\addplot[dashed,mark=square*,red,mark size=1.5pt] table [x=k, y=p2_3, col sep=comma]{rbts_crd_p_reach2.csv};
\addplot[mark=o,blue,mark size=1.5pt] table [x=k, y=p1_4, col sep=comma]{rbts_crd_p_reach2.csv};
\addplot[dashed,mark=o*,blue,mark size=1.5pt] table [x=k, y=p2_4, col sep=comma]{rbts_crd_p_reach2.csv};
\addplot[mark=triangle*,magenta,mark size=1.5pt] table [x=k, y=p2_5, col sep=comma]{rbts_crd_p_reach2.csv};
\addplot[dashed,mark=triangle*,magenta,mark size=1.5pt] table [x=k, y=p1_5, col sep=comma]{rbts_crd_p_reach2.csv};
\addplot[mark=+,cyan,mark size=1.5pt] table [x=k, y=p1_6, col sep=comma]{rbts_crd_p_reach2.csv};
\addplot[dashed,mark=square,cyan,mark size=1.5pt] table [x=k, y=p2_6, col sep=comma]{rbts_crd_p_reach2.csv};
\addplot[mark=o,orange,mark size=1.5pt] table [x=k, y=p1_7, col sep=comma]{rbts_crd_p_reach2.csv};
\addplot[dashed,mark=square,orange,mark size=1.5pt] table [x=k, y=p2_7, col sep=comma]{rbts_crd_p_reach2.csv};
\end{axis}
\end{tikzpicture}
}
\end{subfigure}
\vspace*{-0.4cm}
\caption{Robot coordination: $\nashop{p_1{:}p_2}{=?}{\probop{}{\bfuture \mathit{goal}_1}{+}\probop{}{\bfuture \mathit{goal}_2}}$ ($q{=}0.1$)}\label{robot-fig}
\vspace*{-0.5cm}
\end{figure}

We suppose the robots try to maximise the probability of reaching their individual goals %
eventually and within a given number of steps ($k$).
If there is no bound and $l{\geq}4$, the SWNE strategies allow each robot to reach its goal with probability 1 (as time is not an issue, they can collaborate to avoid crashing). For the bounded case, in \figref{robot-fig} we have plotted both the sum of the probabilities for a grid of size 10 (left) and the probabilities of the individual players for different grid sizes (right) as $k$ varies. When there is only one route to each goal within the bound (along the diagonal), i.e.\ when $k=l{-}1$, the SWNE strategies of both robots take this route. In odd grids, there is a high chance of crashing, but also a chance one will deviate and the other reaches their goal. Initially, as the bound $k$ increases, for odd grids the SWNE values for the players are not equal (see \figref{robot-fig} right). Here, it is better overall for one to follow the diagonal and the other to take a longer route, as if both took the diagonal route, the chance of crashing increases, decreasing the chance of reaching their goals. %

\startpara{Aloha} 
This case study concerns three users trying to send packets using the slotted ALOHA protocol. In a time slot, if a single user tries to send a packet, there is a probability ($q$) that the packet is sent; as more users try and send, then the probability of success decreases. If sending a packet fails, the number of slots a user waits before resending is set according to an exponential backoff scheme. More precisely, each user maintains a backoff counter which it increases each time there is a failure (up to $b_\mathit{max}$) and, if the counter equals $k$, randomly chooses the slots to wait from $\{0,1,\dots,2^k{-}1\}$.

\begin{figure}[t]
\centering
\vspace*{-0.2cm}
\begin{subfigure}{.49\textwidth}
\centering
\tiny{
\begin{tikzpicture}
\begin{axis}[
    ylabel={Sum of probabilities},
    xlabel={$D$},
    xmin=0, xmax=10,
    xtick={0,1,2,3,4,5,6,7,8,9,10},
    ymin=0, ymax=2,
    ytick={0,0.5,1,1.5,2.0,2.5},
    legend pos=north west,
    legend style={fill=none},
    ymajorgrids=true,
    grid style=dashed,
    height=4.25cm,
    width=0.95\textwidth,
    legend entries={
                {$\snashop{p_{\scale{.75}{1}}{:}p_{\scale{.75}{2}}}{=?}{\mathtt{P}{+}\mathtt{P}}$},
                \textit{$\coalition{\mathit{p_{\scale{.75}{1}}}}{\mathtt P_{\scale{.75}{\max}}}+{\mathtt P_{\scale{.75}{\max}}}$},
                \textit{$\coalition{\mathit{p_{\scale{.75}{1}}}}{\mathtt P_{\scale{.75}{\max}}}+{\mathtt P_{\scale{.75}{\min}}}$}    
                },
legend pos=south east]
\addlegendimage{mark=square*,red,mark size=1.5pt}
\addlegendimage{mark=*,blue,mark size=1.5pt}
\addlegendimage{mark=triangle,orange,mark size=1.5pt}
]

\addplot[mark=square*,red,mark size=1.5pt] table [x=tmax, y=peq, col sep=comma]{alh_bckff_3p_tmax.csv};
\addplot[mark=*,blue,mark size=1.5pt] table [x=tmax, y=pmxmx, col sep=comma]{alh_bckff_3p_tmax.csv};
\addplot[mark=triangle,orange,mark size=1.5pt] table [x=tmax, y=pmxmn, col sep=comma]{alh_bckff_3p_tmax.csv};


\end{axis}
\end{tikzpicture}
}
\end{subfigure}
\begin{subfigure}{.49\textwidth}
\centering
\tiny{
\begin{tikzpicture}
\begin{axis}[
    ylabel={Probability},
    xlabel={$q$},
    xmin=0, xmax=1,
    xtick={0,0.1,0.2,0.3,0.4,0.5,0.6,0.7,0.8,0.9,1.0},
    ymin=0, ymax=1,
    ytick={0,0.2,0.4,0.6,0.8,1.0},
    legend pos=north west,
    legend style={fill=none},
    ymajorgrids=true,
    grid style=dashed,
    height=4.25cm,
    width=0.95\textwidth,
    legend entries={
                 Player 1 (SWNE),
                Player 2 (SWNE),
                Player 1 (Mx),
                Player 2 (Mx)
                },
legend pos=north west]
\addlegendimage{mark=square*,red,mark size=1.5pt}
\addlegendimage{mark=*,blue,mark size=1.5pt}
\addlegendimage{mark=triangle*,magenta,mark size=1.5pt}
\addlegendimage{mark=+,cyan,mark size=1.5pt}
]

\addplot[mark=square*,red,mark size=1.5pt] table [x=p11, y=player1, col sep=comma]{alh_bckff_3p_p11_players.csv};
\addplot[mark=*,blue,mark size=1.5pt] table [x=p11, y=player2, col sep=comma]{alh_bckff_3p_p11_players.csv};;

\addplot[mark=triangle*,magenta,mark size=1.5pt] table [x=p11, y=p1, col sep=comma]{alh_bckff_3p_p11_players.csv};
\addplot[mark=+,cyan,mark size=1.5pt] table [x=p11, y=p2, col sep=comma]{alh_bckff_3p_p11_players.csv};

\end{axis}
\end{tikzpicture}
}
\end{subfigure}
\vspace*{-0.4cm}
\caption{Aloha: $\nashop{p_1{:}\{p_2,p_3\}}{=?}{\probop{}{\future (\mathit{sent}_1 {\wedge} t{\leq}D)}{+}\probop{}{\future (\mathit{sent}_2 {\wedge} \mathit{sent}_3 {\wedge} t{\leq}D)}}$}\label{aloha-fig}
\vspace*{-0.4cm}
\end{figure}

We suppose three users try to maximise the probability of sending packets before a deadline $D$, with users 2 and 3 forming a coalition. \figref{aloha-fig} presents total values as $D$ varies (left) and individual values as $q$ varies (right). Through synthesis, we find the collaboration is dependent on $D$ and $q$. Given more time there is more chance for the users to collaborate sending in different slots, 
while if $q$ is large it is unlikely users need to repeatedly send, so again can send in different slots. 
As the coalition has more messages to send, their probabilities are lower. However, even for two users, the probabilities are different, since, although it is advantageous to collaborate and only one user tries first, if transmission fails, then both users try to send as this is the best option for their individual goals.

\startpara{Medium Access Control}
Our third case study extends the CSG model from \egref{2-eg} by assuming the probability of a successful transmission when a single user tries to transmit equals $q_1$ and the energy of each user is bounded by $e_\mathit{max}$.

\begin{figure}[t]
\centering
\begin{subfigure}{.49\textwidth}
\centering
\tiny{
\begin{tikzpicture}
\begin{axis}[
    title style={yshift=-2ex},
    title={$\rewop{\scale{.75}{r_{\scale{.75}{1}}}}{}{\scumul{\scale{.75}{\leq k}}}{+}\rewop{\scale{.75}{r_{\scale{.75}{2}}}}{}{{\scumul{\scale{.75}{\leq k}}}}$},
    ylabel={Messages Successfully Sent},
    xlabel={$k$},
    xmin=2, xmax=12,
    xtick={2,3,4,5,6,7,8,9,10,11,12},
    ymin=0, ymax=10,
    ytick={0,1,2,3,4,5,6,7,8,9,10},
    legend pos=north west,
    legend style={fill=none},
    ymajorgrids=true,
    grid style=dashed,
    height=4.25cm,
    width=0.95\textwidth,
    legend entries={
                {$\snashop{p_{\scale{.75}{1}}{:}p_{\scale{.75}{2}}}{=?}{\mathtt{R^{\scale{.75}{1}}}{+}\mathtt{R^{\scale{.75}{2}}}}$},
                \textit{$\coalition{\mathit{p_{\scale{.75}{1}}}}{\mathtt R_{\scale{.75}{\max}}^{\scale{.75}{1}}}+{\mathtt R_{\scale{.75}{\max}}^{\scale{.75}{2}}}$},
                \textit{$\coalition{\mathit{p_{\scale{.75}{1}}}}{\mathtt R_{\scale{.75}{\max}}^{\scale{.75}{1}}}+{\mathtt R_{\scale{.75}{\min}}^{\scale{.75}{2}}}$}
                },
legend pos=south east]
\addlegendimage{mark=square*,red,mark size=1.5pt}
\addlegendimage{mark=*,blue,mark size=1.5pt}
\addlegendimage{mark=triangle,orange,mark size=1.5pt}
]
\addplot[mark=square*,red,mark size=1.5pt] table [x=k, y=eqrew, col sep=comma]{mdm_accss_r_bound.csv};
\addplot[mark=*,blue,mark size=1.5pt] table [x=k, y=mxmxrew, col sep=comma]{mdm_accss_r_bound.csv};
\addplot[mark=triangle,orange,mark size=1.5pt] table [x=k, y=mxmnrew, col sep=comma]{mdm_accss_r_bound.csv};

\end{axis}
\end{tikzpicture}
}
\end{subfigure}
\begin{subfigure}{.49\textwidth}
\centering
\tiny{
\begin{tikzpicture}
\begin{axis}[
    title style={yshift=-2ex},
    title={$\probop{}{\sbfuture \mathit{sent}_{\scale{.75}{1}}{=}s_{\scale{.75}{\mathit{max}}}}{+}\probop{}{\sbfuture \mathit{sent}_{\scale{.75}{2}}{=}s_{\scale{.75}{\mathit{max}}}}$},
    ylabel={Sum of Probabilities},
    xlabel={$k$},
    xmin=5, xmax=12,
    xtick={5,6,7,8,9,10,11,12},
    ymin=0, ymax=2,
    ytick={0,0.25,0.5,0.75,1,1.25,1.5,1.75,2},
    legend pos=north west,
    legend style={fill=none},
    ymajorgrids=true,
    grid style=dashed,
    height=4.25cm,
    width=0.95\textwidth,
    legend entries={
                {$\snashop{p_{\scale{.75}{1}}{:}p_{\scale{.75}{2}}}{=?}{\mathtt{P}{+}\mathtt{P}}$},
                \textit{$\coalition{\mathit{p_{\scale{.75}{1}}}}{\mathtt P_{\scale{.75}{\max}}}+{\mathtt P_{\scale{.75}{\max}}}$},
                \textit{$\coalition{\mathit{p_{\scale{.75}{1}}}}{\mathtt P_{\scale{.75}{\max}}}+{\mathtt P_{\scale{.75}{\min}}}$}    
                },
legend pos=north west]
\addlegendimage{mark=square*,red,mark size=1.5pt}
\addlegendimage{mark=*,blue,mark size=1.5pt}
\addlegendimage{mark=triangle,orange,mark size=1.5pt}
]

\addplot[mark=square*,red,mark size=1.5pt] table [x=k, y=peq, col sep=comma]{mdm_accss_cnt_p_reach.csv};
\addplot[mark=*,blue,mark size=1.5pt] table [x=k, y=pmxmx, col sep=comma]{mdm_accss_cnt_p_reach.csv};
\addplot[mark=triangle,orange,mark size=1.5pt] table [x=k, y=pmxmn, col sep=comma]{mdm_accss_cnt_p_reach.csv};

\end{axis}
\end{tikzpicture}
}
\end{subfigure}
\vspace*{-0.4cm}
\caption{Medium access control ($e_\mathit{max}{=}5$, $s_\mathit{max}{=}5$, $q_{1}{=}0.9$ and $q_{2}{=}0.75$)}\label{medium-fig}
\vspace*{-0.5cm}
\end{figure}

We consider two Nash properties for this model, both bounded by the number of time slots ($k$). The goal for each user in the first property is to maximise their expected number of successful transmissions and the second to maximise the probability of successfully transmitting a certain number ($s_\mathit{max}$) of messages. \figref{medium-fig} presents results for these properties as the bound $k$ varies. For both properties, the SWNE strategies yield equal values for the players. Synthesising strategies we see that for small values of $k$ there is not sufficient time to collaborate (both users always try and transmit); however, as $k$ increases there is time for the users to collaborate and try to transmit in different slots, and hence improve their values. Since the users have limited energy, \figref{medium-fig} shows that eventually adding steps does not increase the reward or probability.

\startpara{Power Control}
Our final case study is based a model of power control in cellular networks from~\cite{BRE13}. In the model, phones emit signals over a cellular network and the signals can be strengthened by increasing the power level up to a bound ($\mathit{pow}_\mathit{max}$). A stronger signal can improve transmission quality, but uses more energy and lowers the quality of other transmissions due to interference. We extend this model by adding a failure probability ($q_\mathit{fail}$) when a power level is increased and assume each phone has a limited battery capacity ($e_\mathit{max}$). Based on~\cite{BRE13}, we associate a reward structure with each phone representing transmission quality dependent both on its power level and that of other phones due to interference.

\begin{figure}[t]
\centering
\vspace*{-0.2cm}
\begin{subfigure}{.49\textwidth}
\centering
\tiny{
\begin{tikzpicture}
\begin{axis}[
    ylabel={Sum of Rewards},
    xlabel={$e_\mathit{max}$},
    xmin=1, xmax=5,
    xtick={1,2,3,4,5},
    ymin=3000, ymax=4000,
    ytick={3000,3200,3400,3600,3800,4000},
    legend pos=north west,
    legend style={fill=none},
    ymajorgrids=true,
    grid style=dashed,
    height=4.25cm,
    width=0.95\textwidth,
    legend entries={
                {$\snashop{p_{\scale{.75}{1}}{:}p_{\scale{.75}{2}}}{=?}{\mathtt{R^{\scale{.75}{1}}}{+}\mathtt{R^{\scale{.75}{2}}}}$},
                \textit{$\coalition{\mathit{p_1}}{\mathtt R_{\scale{.75}{\max}}^{\scale{.75}{1}}}+{\mathtt R_{\scale{.75}{\max}}^{\scale{.75}{2}}}$},
                \textit{$\coalition{\mathit{p_1}}{\mathtt R_{\scale{.75}{\max}}^{\scale{.75}{1}}}+{\mathtt R_{\scale{.75}{\min}}^{\scale{.75}{2}}}$}
                },
legend pos=north west]
\addlegendimage{mark=square*,red,mark size=1.5pt}
\addlegendimage{mark=*,blue,mark size=1.5pt}
\addlegendimage{mark=triangle,orange,mark size=1.5pt}
]

\addplot[mark=square*,red,mark size=1.5pt] table [x=pmax, y=eqrew, col sep=comma]{pwr_cntrl_r_reach.csv};
\addplot[mark=*,blue,mark size=1.5pt] table [x=pmax, y=mxmxrew, col sep=comma]{pwr_cntrl_r_reach.csv};;
\addplot[mark=triangle,orange,mark size=1.5pt] table [x=pmax, y=mxmnrew, col sep=comma]{pwr_cntrl_r_reach.csv};;

\end{axis}
\end{tikzpicture}
}
\end{subfigure}
\begin{subfigure}{.49\textwidth}
\centering
\tiny{
\begin{tikzpicture}
\begin{axis}[
    xlabel={$e_\mathit{max}$},
    xmin=1, xmax=5,
    xtick={1,2,3,4,5},
    ymin=1580, ymax=2030,
    ytick={1600,1700,1800,1900,2000},
    legend pos=north west,
    legend style={fill=none},
    ymajorgrids=true,
    grid style=dashed,
    height=4.25cm,
    width=0.95\textwidth,
    legend entries={
                Player 1 (SWNE),
                Player 2 (SWNE),
                Player 1 (Mx),
                Player 2 (Mx)
                },
legend pos=north west]
\addlegendimage{mark=square*,red,mark size=1.5pt}
\addlegendimage{mark=*,blue,mark size=1.5pt}
\addlegendimage{mark=triangle*,magenta,mark size=1.5pt}
\addlegendimage{mark=+,cyan,mark size=1.5pt}
]

\addplot[mark=square*,red,mark size=1.5pt] table [x=pmax, y=player1, col sep=comma]{pwr_cntrl_r_reach2.csv};
\addplot[mark=*,blue,mark size=1.5pt] table [x=pmax, y=player2, col sep=comma]{pwr_cntrl_r_reach2.csv};;

\addplot[mark=triangle*,magenta,mark size=1.5pt] table [x=pmax, y=mxmxrewp1, col sep=comma]{pwr_cntrl_r_reach3.csv};
\addplot[mark=+,cyan,mark size=1.5pt] table [x=pmax, y=mxmxrewp2, col sep=comma]{pwr_cntrl_r_reach3.csv};

\end{axis}
\end{tikzpicture}
}
\end{subfigure}
\vspace*{-0.4cm}
\caption{Power Control: $\nashop{p_1{:}p_2}{=?}{\rewop{r_1}{}{\future e_1{=}0}{+}\rewop{r_2}{}{\future e_2{=}0}}$}\label{power-fig}
\vspace*{-0.5cm}
\end{figure}

We consider two players, each trying to maximise their reward before their battery is empty. \figref{power-fig} presents, for $p_\mathit{max}{=}5$ and $e_\mathit{max}{=}5$, the sum of the SWNE values (left) and the values of the individual players (right) as the battery capacity varies. The values of the players are different because if one increases their power level this increases the overall reward (their reward increases, while the other's decreases by a lesser amount due to interference), whereas if both increase the overall reward decreases (both rewards decrease due to interference). %

\section{Conclusions}

We have presented a logic, algorithms and tool for model checking and strategy synthesis of concurrent stochastic games using Nash equilibria-based properties.
In comparison to existing methods, which support only zero-sum properties,
we demonstrate, on a range of case studies, that our approach produces
strategies that are collectively more beneficial for all players in the game.
Future work will investigate other techniques for Nash equilibria synthesis,
non-coalitional multi-player games and mechanism design.

\startpara{Acknowledgements} This work is partially supported by the EPSRC Programme Grant on Mobile Autonomy and the PRINCESS project, under the DARPA BRASS programme. We would like to thank the reviewers of an earlier version of this paper for finding a flaw in the correctness proof.

\bibliographystyle{spmpsci}
\bibliography{bib}

\ifthenelse{\isundefined{\techreport}}{%
}{%
\newpage
\appendix
\newpage
\section{Model Checking Procedure for Nash Formulae}\label{mc-app}

In this section we include the missing detail from \sectref{mc-sect} concerning computing values of Nash formulae. For the remainder of the section we fix a game $G$ and Nash formula $\nashop{C{:}C'}{\sim x}{\theta}$.

We first give the missing cases when the objectives are of the same type and then explain how to deal with a mixture of finite- and infinite-horizon objectives. We omit the cases when the objectives are both finite-horizon but of different types as these follow similarly to the cases when they are of the same type. We require the following notation used in \sectref{mc-sect}: for any formula $\phi$ and state $s$ let $\eta_{\phi}(s)$ equal $1$ if $s \in \Sat(\phi)$ and $0$ otherwise.
\startpara{Next} If $\theta = \probop{}{\next \phi^1}{+}\probop{}{\next \phi^2}$, then 
$\V_{\game^C}(s,\theta)$ equals SWNE values of the bimatrix game $(\mgame_1, \mgame_2) \in \Rset^{l \times m}$ where: 
\begin{align*}
z^1_{i,j} & = \; \mbox{$\sum_{s' \in S}$} \,  \delta(s,(a_i,b_j))(s') \cdot \eta_{\phi^1}(s') \\
z^2_{i,j} & = \; \mbox{$\sum_{s' \in S}$} \,  \delta(s,(a_i,b_j))(s') \cdot \eta_{\phi^2}(s') \, .
\end{align*}
\startpara{Bounded Until} If $\theta = \probop{}{\phi_1^1 \buntilp{k_1} \phi_2^1}\,+\,\probop{}{\phi_1^2 \buntilp{k_2} \phi_2^2}$, then we compute SWNE values for the objectives for the non-zero sum formulae $\theta_{n+n_1,n+n_2}=\probop{}{\phi_1^1 \buntilp{n+n_1} \phi_2^1}\,+\,\probop{}{\phi_1^2 \buntilp{n+n_2} \phi_2^2}$ for $0{\leq} n {\leq}k$ recursively,  where $k{=}\min\{k_1,k_2\}$, $n_1 {=} k_1{-}k$ and $n_2 {=} k_2{-}k$. For any state $s$, if $n{=}0$, then:
\[
\V_{\game^C}(s,\theta_{n_1,n_2}) = \left\{ \begin{array}{cl}
(\eta_{\phi^1_2}(s),\eta_{\phi^2_2}(s)) & \;\;\mbox{if $n_1{=}n_2{=}0$} \\
(\eta_{\phi^1_2}(s),\probopP^{\max}_{\game,s}(\phi_1^2 \buntilp{n_2} \phi_2^2)) & \;\;\mbox{else if $n_1{=}0$} \\
(\probopP^{\max}_{\game,s}(\phi_1^1 \buntilp{n_1} \phi_2^1),\eta_{\phi^2_2}(s)) & \;\;\mbox{otherwise.}
\end{array}\right.
\]
On the other hand, if $n{>}0$, then:
\[
\V_{\game^C}(s,\theta_{n+n_1,n+n_2}) = 
\left\{ \begin{array}{cl}
(1,1) & \;\; \mbox{if $s \in \Sat(\phi_2^1) \cap \Sat(\phi_2^2)$} \\
(1,\probopP^{\max}_{\game,s}(\phi_1^2 \buntilp{n+n_2} \phi_2^2)) & \;\; \mbox{else if $s \in \Sat(\phi_2^1)$} \\
(\probopP^{\max}_{\game,s}(\phi_1^1 \buntilp{n+n_1} \phi_2^1),1) & \;\; \mbox{else if $s \in \Sat(\phi_2^2)$} \\
(\probopP^{\max}_{\game,s}(\phi_1^1 \buntilp{n+n_1} \phi_2^1),0) & \;\; \mbox{else if $s \in \Sat(\phi_1^1) \setminus \Sat(\phi_1^2)$} \\
(0,\probopP^{\max}_{\game,s}(\phi_1^2 \buntilp{n+n_2} \phi_2^2)) & \;\; \mbox{else if $s \in \Sat(\phi_1^2) \setminus \Sat(\phi_1^1)$} \\
(0,0) & \;\; \mbox{else if $s \not\in \Sat(\phi_1^1) \cap \Sat(\phi_1^2)$} \\
\val(\mgame_1, \mgame_2) & \;\; \mbox{otherwise}
\end{array} \right. 
\]
where $\val(\mgame_1, \mgame_2)$ equals SWNE values of the bimatrix game $(\mgame_1,\mgame_2)\in \Rset^{l \times m}$:
\begin{align*}
z^1_{i,j} & = \; \mbox{$\sum_{s' \in S}$} \,   \delta(s,(a_i,b_j))(s') \cdot v^{s',1}_{(n-1)+n_1} \\ 
z^2_{i,j} & =  \;  \mbox{$\sum_{s' \in S}$} \,   \delta(s,(a_i,b_j))(s') \cdot v^{s',2}_{(n-1)+n_2}
\end{align*}
and $(v^{s',1}_{(n-1)+n_1},v^{s',2}_{(n-1)+n_2}) = \V_{\game^C}(s',\theta_{(n-1)+n_1,(n-1)+n_2})$ for all $s' \in S$.
\startpara{Bounded Instantaneous Rewards} If $\theta {=} \rewop{r_1}{}{\sinstant{=k_1}}{+}\rewop{r_2}{}{\sinstant{=k_2}}$, then we compute SWNE values of the objectives for the non-zero sum formulae $\theta_{n+n_1,n+n_2}=\rewop{r_1}{}{\sinstant{=n+n_1}}+\rewop{r_2}{}{\sinstant{=n+n_2}}$ for $0{\leq} n {\leq}k$ recursively,  where $k{=}\min\{k_1,k_2\}$, $n_1 = k_1{-}k$ and $n_2 {=} k_2{-}k$. For any state $s$, if $n{=}0$, then:
\[
\V_{\game^C}(s,\theta_{n_1,n_2}) =  
\left\{ \begin{array}{cl}
(r^1_S(s),r^2_S(s)) & \;\; \mbox{if $n_1{=}n_2{=}0$} \\
(r^1_S(s),\rewopR^{\max}_{\game,s}(r_2,\sinstant{=n_2})) & \;\; \mbox{else if $n_1{=}0$} \\
(\rewopR^{\max}_{\game,s}(r_1,\sinstant{= n_1}),r^2_S(s)) & \;\; \mbox{otherwise.}
\end{array} \right.
\]
On the other hand, if $n{>}0$, then
$\V_{\game^C}(s,\theta_{n+n_1,n+n_2})$ equals SWNE values of the bimatrix game $(\mgame_1, \mgame_2) \in \Rset^{l \times m}$ where: 
\begin{align*}
z^1_{i,j} & = \; \mbox{$\sum_{s' \in S}$} \,  \delta(s,(a_i,b_j))(s') \cdot v^{s',1}_{(n-1)+n_1} \\
z^2_{i,j} & = \; \mbox{$\sum_{s' \in S}$} \,   \delta(s,(a_i,b_j))(s') \cdot v^{s',2}_{(n-1)+n_2}
\end{align*}
and $(v^{s',1}_{(n-1)+n_1},v^{s',2}_{(n-1)+n_2}) = \V_{\game^C}(s',\theta_{(n-1)+n_1,(n-1)+n_2})$ for all $s' \in S$.
\startpara{Until}
If $\theta = \probop{}{\phi_1^1 \until \phi_2^1}{+}\probop{}{\phi_1^2 \until \phi_2^2}$, values for any state $s$ can be computed through value iteration as the limit $\V_{\game^C}(s,\theta) = \lim_{n \ra \infty} \V_{\game^C}(s,\theta,n)$ where:
\[
\V_{\game^C}(s,\theta,n) = \left\{ \begin{array}{cl}
(1,1) & \;\; \mbox{if $s \in \Sat(\phi_2^1) \cap \Sat(\phi_2^2)$} \\
(1,\probopP^{\max}_{\game,s}(\phi_1^2 \until \phi_2^2)) & \;\; \mbox{else if $s \in \Sat(\phi_2^1)$} \\
(\probopP^{\max}_{\game,s}(\phi_1^1 \until \phi_2^1),1) & \;\; \mbox{else if $s \in \Sat(\phi_2^2)$} \\
(\probopP^{\max}_{\game,s}(\phi_1^1 \until \phi_2^1),0) & \;\; \mbox{else if $s \in \Sat(\phi_1^1) \setminus \Sat(\phi_1^2)$} \\
(0,\probopP^{\max}_{\game,s}(\phi_1^2 \until \phi_2^2)) & \;\; \mbox{else if $s \in \Sat(\phi_1^2) \setminus \Sat(\phi_1^1)$} \\
(0,0) & \;\; \mbox{else if $n{=}0$ or $s \not\in \Sat(\phi_1^1) \cap \Sat(\phi_1^2)$} \\
\val(\mgame_1, \mgame_2) & \;\; \mbox{otherwise}
\end{array} \right. 
\]
where $\val(\mgame_1, \mgame_2)$ equals SWNE values of the bimatrix game $(\mgame_1,\mgame_2)\in \Rset^{l \times m}$:
\begin{align*}
z^1_{i,j} & = \; \mbox{$\sum_{s' \in S}$} \,  \delta(s,(a_i,b_j))(s') \cdot v^{s',1}_{n-1} \\
z^2_{i,j} & = \; \mbox{$\sum_{s' \in S}$} \, \delta(s,(a_i,b_j))(s') \cdot v^{s',2}_{n-1} 
\end{align*}
and $(v^{s',1}_{n-1},v^{s',2}_{n-1}) = \V_{\game^C}(s',\theta,n{-}1)$ for all $s' \in S$. \\ \\
We next demonstrate how to compute values of a Nash formula where the non-zero sum subformula contains both finite- and infinite-horizon objectives by reducing it to the computation of values of a Nash formula where the non-zero sum formula contains only infinite-horizon objectives on a modified game. We consider the case when the first objective is finite-horizon and second infinite-horizon, the symmetric cases follow similarly. In each case, the modified game has states of the form $(s,i)$ where $s \in S$ and $i \in \Nset$ and values for the state $(s,0)$ of the new Nash formula correspond to values of state $s$ for the original Nash formula.
\startpara{Next} For a non-zero sum formula of the form $\theta = \probop{}{\next \phi^1}{+}\probop{}{\phi_1^2 \until \phi_2^2}$, we construct the game $\game' = (N, S', \bar{S}', A, \Delta', \delta', \{ \ap_{\phi^1}, \ap_{\phi_1^2}, \ap_{\phi_2^2} \} , \lab')$ where:
\begin{itemize}
\item
$S' = \{ (s,i) \mid s \in S \wedge 0 {\leq} i {\leq} 1 \}$ and $\bar{S}' = \{ (s,0) \mid s \in S \}$;
\item $\Delta'((s,i)) = \Delta(s)$ for all $(s,i) \in S'$;
\item for any $(s,i),(s',i') \in S'$ and $a \in (A_1(s) \cup \{\bot\}){\times}(A_2(s) \cup \{\bot\})$:
\[
\delta'((s,i),a)((s',i')) = \left\{
\begin{array}{cl}
\delta(s,a)(s') & \;\; \mbox{if $i{=}0$ and $i' {=} i{+}1$ or $i{=}i'{=}1$} \\
0 & \;\; \mbox{otherwise;}
\end{array} \right.
\]
\item
for any $(s,i) \in S'$ and $1{\leq}j{\leq}2$:
\begin{itemize}
\item
$\ap_{\phi^1} \in \lab'((s,i))$ if and only if $s \in \Sat(\phi^1)$ and $i{=}1$;
\item
$\ap_{\phi_j^2} \in \lab'((s,i))$ if and only if $s \in \Sat(\phi_j^2)$.
\end{itemize}
\end{itemize}
Then compute values of $\nashop{C{:}C'}{\sim x}{\probop{}{\future \ap_{\phi^1}}{+}\probop{}{\ap_{\phi_1^2} \until \ap_{\phi_2^2}}}$ for the game $\game'$.
\startpara{Bounded Until} For a non-zero sum formula of the form $\theta = \probop{}{\phi_1^1 \buntilp{k_1} \phi_2^1}+\probop{}{\phi_1^2 \until \phi_2^2}$, we construct the game
\[
\game' = (N, S', \bar{S}', A, \Delta', \delta', \{ \ap_{\phi_1^1} , \ap_{\phi_2^1}, \ap_{\phi_1^2}, \ap_{\phi_2^2} \}, \lab')
\]
where:
\begin{itemize}
\item
$S' = \{ (s,i) \mid s \in S \wedge 0 {\leq} i {\leq} k_1{+}1 \}$ and $\bar{S}' = \{ (s,0) \mid s \in S \}$;
\item $\Delta'((s,i)) = \Delta(s)$ for all $(s,i) \in S'$;
\item for any $(s,i),(s',i') \in S'$ and $a \in (A_1(s) \cup \{\bot\}){\times}(A_2(s) \cup \{\bot\})$:
\[
\delta'((s,i),a)((s',i')) = \left\{
\begin{array}{cl}
\delta(s,a)(s') & \;\; \mbox{if $0 {\leq} i {\leq} k_1$ and $i' {=} i{+}1$ or $i{=}i'{=}k_1$} \\
0 & \;\; \mbox{otherwise;}
\end{array} \right.
\]
\item
for any $(s,i) \in S'$ and $1{\leq}j{\leq}2$:
\begin{itemize}
\item
$\ap_{\phi_1^1} \in \lab'((s,i))$ if and only if $s \in \Sat(\phi_1^1)$ and $0 {\leq} i {\leq} k_1{-}1$;
\item
$\ap_{\phi_2^1} \in \lab'((s,i))$ if and only if $s \in \Sat(\phi_2^1)$ and $0 {\leq} i {\leq} k_1$;
\item
$\ap_{\phi_j^2} \in \lab'((s,i))$ if and only if $s \in \Sat(\phi_j^2)$.
\end{itemize}
\end{itemize}
Then compute values of $\nashop{C{:}C'}{\sim x}{\probop{}{\ap_{\phi_1^1} \until \ap_{\phi_2^1}}{+}\probop{}{\ap_{\phi_1^2} \until \ap_{\phi_2^2}}}$ for the game $\game'$.
\startpara{Bounded Instantaneous Reward} For a non-zero sum formula of the form $\theta  = \rewop{r_1}{}{\sinstant{=k_1}}{+}\rewop{r_2}{}{\future \phi^2}$ we construct the game
\[
\game' = (N, S', \bar{S}', A, \Delta', \delta', \{ \ap_{k_1+1} , \ap_{\phi^2} \}, \lab')
\]
and reward structures $r_1'$ and $r_2'$ where:
\begin{itemize}
\item
$S' = \{ (s,i) \mid s \in S \wedge 0 {\leq} i {\leq} k_1{+}1 \}$ and $\bar{S}' = \{ (s,0) \mid s \in S \}$;
\item $\Delta'((s,i)) = \Delta(s)$ for all $(s,i) \in S'$;
\item for any $(s,i),(s',i') \in S'$ and $a \in (A_1(s) \cup \{\bot\}){\times}(A_2(s) \cup \{\bot\})$:
\[
\delta'((s,i),a)((s',i')) = \left\{
\begin{array}{cl}
\delta(s,a)(s') & \;\; \mbox{if $0 {\leq} i {\leq} k_1$ and $i' {=} i{+}1$ or $i{=}i'{=}k_1{+}1$} \\
0 & \;\; \mbox{otherwise;}
\end{array} \right.
\]
\item
for any $(s,i) \in S'$:
\begin{itemize}
\item
$\ap_{k_1+1} \in \lab'((s,i))$ if and only if $i {=} k_1{+}1$;
\item
$\ap_{\phi^2} \in \lab'((s,i))$ if and only if $s \in \Sat(\phi^2)$;
\end{itemize}
\item
for any $(s,i) \in S'$ and $a \in (A_1(s) \cup \{\bot\}){\times}(A_2(s) \cup \{\bot\})$:
\begin{itemize}
\item
$r^{1'}_A((s,i),a)=0$ and $r^{1'}_S((s,i))=r_S^1(s)$ if $i{=}k_1$ and equals 0 otherwise;
\item
$r^{2'}_A((s,i),a)=r^2_A(s)(a)$ and $r^{2'}_S((s,i))=r_S^2(s)$.
\end{itemize}
\end{itemize}
Then compute values of $\nashop{C{:}C'}{\sim x}{\rewop{r_1'}{}{\future \ap_{k_1+1}}{+}\rewop{r_2'}{}{\future \ap_{\phi^2}}}$ for the game $\game'$.
\startpara{Bounded Cumulative Reward} For a non-zero sum formula of the form $\theta = \rewop{r_1}{}{\scumul{\leq k_1}}{+}\rewop{r_2}{}{\future \phi^2}$ we construct the game \[
\game' = (N, S', \bar{S}', A, \Delta', \delta', \{ \ap_{k_1} , \ap_{\phi^2} \}, \lab')
\]
and reward structures $r_1'$ and $r_2'$ where:
\begin{itemize}
\item
$S' = \{ (s,i) \mid s \in S \wedge 0 {\leq} i {\leq} k_1 \}$ and $\bar{S}' = \{ (s,0) \mid s \in S \}$;
\item $\Delta'((s,i)) = \Delta(s)$ for all $(s,i) \in S'$;
\item for any $(s,i),(s',i') \in S'$ and $a \in (A_1(s) \cup \{\bot\}){\times}(A_2(s) \cup \{\bot\})$:
\[
\delta'((s,i),a)((s',i')) = \left\{
\begin{array}{cl}
\delta(s,a)(s') & \;\; \mbox{if $0 {\leq} i {\leq} k_1{-}1$ and $i' {=} i{+}1$ or $i{=}i'{=}k_1$} \\
0 & \;\; \mbox{otherwise;}
\end{array} \right.
\]
\item
for any $(s,i) \in S'$:
\begin{itemize}
\item
$\ap_{k_1} \in \lab'((s,i))$ if and only if $i {=} k_1$;
\item
$\ap_{\phi^2} \in \lab'((s,i))$ if and only if $s \in \Sat(\phi^2)$;
\end{itemize}
\item
for any $(s,i) \in S'$ and $a \in (A_1(s) \cup \{\bot\}){\times}(A_2(s) \cup \{\bot\})$:
\begin{itemize}
\item
$r_A^{1'}((s,i))(a)=r_A^1(s)$ if $0{\leq}i{\leq}k_1{-}1$ and equals 0 otherwise;
\item
$r_S^{1'}((s,i))=r_S^1(s)$ if $0{\leq}i{\leq}k_1{-}1$ and equals 0 otherwise;
\item
$r^{2'}_A((s,i),a)=r^2_A(s)(a)$ and $r^{2'}_S((s,i))=r_S^2(s)$.
\end{itemize}
\end{itemize}
Then compute values of $\nashop{C{:}C'}{\sim x}{\rewop{r_1'}{}{\future \ap_{k_1}}{+}\rewop{r_2'}{}{\future \ap_{\phi^2}}}$ for the game $\game'$.
\section{Correctness of the Model Checking Algorithm}\label{appx-correctness}

We fix a game $\game$ and Nash formula $\nashop{C{:}C'}{\sim x}{\theta}$. For the case of bounded zero-sum formulae the correctness of the model checking algorithm follows from the fact that we use backwards induction~\cite{SW+01,NMK+44}. Below we consider probabilistic and expected reachability objectives. The case for probabilistic until objectives follows similarly. For any $(v_1,v_2),(v_1',v_2') \in \Rset^2$, let $(v_1,v_2)\leq(v_1',v_2')$ if and only if $v_1 \leq v_1'$ and $v_2 \leq v_2'$. The following lemma follows by definition of subgame perfect SWNE values.
\begin{lemma}\label{probrew-lem}
Consider any strategy profile $\sigma$ and state $s$ of $\game^C$ and let $(v_1^{\sigma,s},v_2^{\sigma,s})$ be the corresponding values of the players in $s$ for the objectives $(X^{\theta_1},X^{\theta_2})$. Considering subgame perfect SWNE values of the objectives $(X^{\theta_1},X^{\theta_2})$ in state $s$, in the case that $\theta$ is of the form $\probop{}{\future \phi^1}{+}\probop{}{\future \phi^2}:$
\begin{itemize}
\item
if $s \sat \phi^1 \wedge \phi^2$, then $(1,1)$ are the unique subgame perfect SWNE values and $(v_1^{\sigma,s},v_2^{\sigma,s}) \leq (1,1)$;
\item
if $s \sat \phi^1 \wedge \neg \phi^2$, then $(1,\probopP^{\max}_{\game,s}(\future \phi^2))$ are the unique subgame perfect SWNE values and $(v_1^{\sigma,s},v_2^{\sigma,s}) \leq (1,\probopP^{\max}_{\game,s}(\future \phi^2))$;
\item
if $s \sat \neg \phi^1 \wedge \phi^2$, then $(\probopP^{\max}_{\game,s}(\future \phi^1),1)$ are the unique subgame perfect SWNE values for $s$ and $(v_1^{\sigma,s},v_2^{\sigma,s}) \leq (\probopP^{\max}_{\game,s}(\future \phi^1),1)$.
\end{itemize}
On the other hand, in the case that $\theta$ is of the form $\rewop{r_1}{}{\future \phi^1}{+}\rewop{r_2}{}{\future \phi^2}:$
\begin{itemize}
\item
if $s \sat \phi^1 \wedge \phi^2$, then $(0,0)$ are the unique subgame perfect SWNE values and $(v_1^{\sigma,s},v_2^{\sigma,s}) \leq (0,0)$;
\item
if $s \sat \phi^1 \wedge \neg \phi^2$, then $(0,\rewopR^{\max}_{\game,s}(r_2,\future \phi^2))$ are the unique subgame perfect SWNE values and $(v_1^{\sigma,s},v_2^{\sigma,s}) \leq (0,\rewopR^{\max}_{\game,s}(r_2,\future \phi^2))$;
\item
if $s \sat \neg \phi^1 \wedge \phi^2$, then $(\rewopR^{\max}_{\game,s}(r_1,\future \phi^1),0)$ are the unique subgame perfect SWNE values and $(v_1^{\sigma,s},v_2^{\sigma,s}) \leq (\rewopR^{\max}_{\game,s}(r_1,\future \phi^1),0)$.
\end{itemize}
\end{lemma}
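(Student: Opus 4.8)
The plan is to reduce both assertions in each bullet to a single observation about the semantics of $\future\phi^i$ at a state that already satisfies $\phi^i$, and then to read off the characterisation of the SWNE values and the domination bound directly from the definitions of Nash equilibrium and of $\probopP^{\max}$/$\rewopR^{\max}$ as cooperative (MDP) optima.

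First I would establish the key \emph{forced value} fact: if $s \sat \phi^i$ then, for \emph{every} profile $\sigma$, the value $\Eset^{\sigma}_{\game^C,s}(X^{\theta_i})$ equals $1$ in the probabilistic case and $0$ in the reward case. This is immediate from \defref{sem-def}: every path $\pi$ from $s$ has $\pi(0)=s\sat\phi^i$, so $\pi\sat\future\phi^i$ and the indicator gives $X^{\theta_i}(\pi)=1$; in the reward case $k_{\phi^i}=-1$, the defining sum is empty, and $\rew(r_i,\future\phi^i)(\pi)=0$. Hence in any state satisfying $\phi^i$ player $i$'s value is constant, independent of $\sigma$ (and finite in the reward case by \assumref{game-assum}). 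From this the domination bounds follow at once: the component for the player whose target holds is exactly the forced value, while the other component is bounded above by its cooperative optimum, since $\probopP^{\max}_{\game,s}(\future\phi^j)$ (resp.\ $\rewopR^{\max}_{\game,s}(r_j,\future\phi^j)$) is by definition the supremum of $\Eset^{\sigma}_{\game^C,s}(X^{\theta_j})$ over all $\sigma$; when both targets hold, both components are forced values.

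For the uniqueness of the subgame perfect SWNE values I would argue as follows. By the forced-value fact, in any equilibrium at $s$ the value of the satisfied player is pinned to $1$ (resp.\ $0$), so maximising the welfare sum is equivalent to maximising the unsatisfied player $j$'s value alone. I would exhibit an equilibrium attaining the claimed value by letting the satisfied player play the player-$i$ component of a cooperative-optimal profile for $X^{\theta_j}$: that player is indifferent (its value is forced) and hence trivially best-responding, while against this fixed strategy player $j$'s best response realises exactly the cooperative optimum and cannot exceed it. This yields a (subgame perfect) NE with value $(1,\probopP^{\max}_{\game,s}(\future\phi^j))$ in the probabilistic case and $(0,\rewopR^{\max}_{\game,s}(r_j,\future\phi^j))$ in the reward case; since every NE has welfare sum at most (forced value)${}+{}$(cooperative optimum), with equality attained here, these are the unique SWNE values. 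The ``both satisfied'' case is degenerate: both values are forced, so $(1,1)$ (resp.\ $(0,0)$) is the only equilibrium value and the bound holds with equality.

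The hard part will be the equilibrium verification in the one-target-satisfied case — namely arguing cleanly that the satisfied player's indifference lets it adopt the cooperative-optimal strategy for the other player without violating its own best-response condition, and that against this fixed strategy the other player's optimum coincides with the global MDP optimum named in the statement. Everything else is a direct unwinding of the semantics and of the definition of SWNE.
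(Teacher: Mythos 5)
Your proposal is correct and matches the paper's intent: the paper offers only the one-line justification that the lemma ``follows by definition of subgame perfect SWNE values,'' and your argument --- the forced-value observation at states satisfying $\phi^i$, the resulting reduction of welfare maximisation to the remaining player's cooperative (MDP) optimum, and the indifference argument establishing that this optimum is attained by an equilibrium --- is precisely the unwinding of the definitions that the authors leave implicit. No gap; the ``hard part'' you flag is handled exactly by the indifference of the satisfied player together with the fact that $\probopP^{\max}$/$\rewopR^{\max}$ is both attainable against and an upper bound for any response to the satisfied player's component of a cooperative-optimal profile.
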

Next we require the following objectives of $\game^C$.
\begin{definition}\label{bounded-objective-def}
For any probabilistic or expected reachability non-zero sum formula $\theta$, $1{\leq}i{\leq}2$ and $n \in \Nset$, let $X^\theta_{i,n}$ be the objective where for any path $\pi$ of $\game^C:$
\begin{eqnarray*}
X^{\probop{}{\future \! \phi^1}{+}\probop{}{\future \! \phi^2}}_{i,n}(\pi) & = & \left\{ \begin{array}{cl}
1 & \; \mbox{if $\exists k {\leq} n. \, \pi(k) \sat \phi^i$} \\
0 & \; \mbox{otherwise}
\end{array} \right. \\
X^{\rewop{r_{\scale{.75}{1}}}{}{\future \! \phi^1}{+}\rewop{r_{\scale{.75}{2}}}{}{\future \! \phi^2}}_{i,n}(\pi) & =  & \left\{ \begin{array}{cl}
\infty
& \mbox{if} \; \forall k \in \Nset . \, \pi(k) \notsat \phi \\
\mbox{$\sum_{k=0}^{k_\phi}$} \big( r_A(\pi(k),\pi[k])+r_S(\pi(k)) \big) & \mbox{if $k_\phi \leq n{-}1$} \\
0 & \mbox{otherwise}
\end{array} \right.
\end{eqnarray*}
and $k_\phi = \min \{ k{-}1 \mid k \in \Nset \wedge \pi(k) \sat \phi \}$.
\end{definition}
The following lemma demonstrates that, for a fixed strategy profile and state, the values of these objectives are non-decreasing and converge uniformly to the values of $\theta$.
\begin{lemma}\label{epsilon-lem}
For any probabilistic or expected reachability non-zero sum formula $\theta$ and $\varepsilon{>}0$, there exists $N \in \Nset$ such that for any $n {\geq} N$, $s \in S$, $\sigma \in \Sigma^1_{\game^C} {\times} \Sigma^2_{\game^C}$ and $1{\leq}i{\leq}2:$
\[
0 \ \leq \ \Eset^{\sigma}_{\game^C,s}(X^\theta_i) - \Eset^{\sigma}_{\game^C,s}(X^\theta_{i,n}) \ \leq \ \varepsilon \, .
\]
\end{lemma}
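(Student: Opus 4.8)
The plan is to prove the two inequalities of \lemref{epsilon-lem} separately: the lower bound by a pointwise domination of random variables, and the upper bound by reducing the gap to a \emph{staying probability} that Assumption~\ref{game-assum} forces to decay geometrically, uniformly in the strategy profile. For the lower bound I would first observe that $X^\theta_{i,n}(\pi) \le X^\theta_i(\pi)$ for every path $\pi$ of $\game^C$. For the probabilistic case this is immediate, since $\{\exists k{\le}n.\,\pi(k)\sat\phi^i\} \subseteq \{\exists k.\,\pi(k)\sat\phi^i\}$; for the reward case it holds because on paths that never reach $\phi^i$ both equal $\infty$, on paths with $k_\phi \le n{-}1$ they coincide, and otherwise $X^\theta_{i,n}(\pi)=0$ while $X^\theta_i(\pi)\ge 0$ as rewards are non-negative. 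Monotonicity of expectation then yields $\Eset^\sigma_{\game^C,s}(X^\theta_{i,n}) \le \Eset^\sigma_{\game^C,s}(X^\theta_i)$, i.e.\ the left inequality; the same pointwise argument gives $X^\theta_{i,n} \le X^\theta_{i,n+1}$, establishing the claimed non-decreasing behaviour.

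Next I would reduce the gap to a reachability quantity. Let $\tau$ be the first-hitting time of $\Sat(\phi^i)$ along a path (with $\tau=\infty$ if $\phi^i$ never holds). For the probabilistic case the gap equals $\Eset^\sigma_{\game^C,s}(X^\theta_i)-\Eset^\sigma_{\game^C,s}(X^\theta_{i,n}) = \Prob^\sigma_{\game^C,s}(n<\tau<\infty)$, and since any such path avoids $\Sat(\phi^i)$ on its first $n$ steps, this is bounded by the probability of remaining in the \emph{maybe}-set $S_{?} = S \setminus (\Sat(\phi^i)\cup\Sno)$ for $n$ steps, where $\Sno$ collects states from which $\phi^i$ is unreachable under every profile. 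For the reward case, Assumption~\ref{game-assum} gives $\tau<\infty$ almost surely under every $\sigma$, so with $R_{\max}$ bounding $r_A{+}r_S$ the gap is at most $R_{\max}\cdot\Eset^\sigma_{\game^C,s}(\tau\cdot\mathbf{1}[\tau>n])$. In both cases it therefore suffices to control, uniformly in $\sigma$ and $s$, the probability of not having been absorbed (into the target, or into $\Sno$ in the probabilistic case) within $n$ steps.

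The main work is the uniform geometric decay: I would show there are $m\in\Nset$ and $\gamma<1$ with $\Prob^\sigma_{\game^C,s}(\text{not absorbed within } km \text{ steps}) \le \gamma^{k}$ for all $k$, $s$ and $\sigma$. First I would argue that the relevant region contains no end component: for expected reachability this follows from targets being reached with probability $1$; for probabilistic reachability, any end component inside $S_{?}$ would be non-terminal (excluded by assumption) or terminal, and a terminal one is absorbing under $\delta$ and hence would place its states in $\Sno$, a contradiction. Standard end-component theory then gives $\max_\sigma \Prob^\sigma_{\game^C,s}(\text{never absorbed})=0$, and finite-state value iteration for this safety objective shows the $n$-step staying probabilities $g_n(s)=\max_\sigma \Prob^\sigma_{\game^C,s}(\text{not absorbed within } n)$ decrease to $0$. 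Since $S$ is finite I can pick $m$ with $g_m(s)<1$ for all $s$ and set $\gamma=\max_s g_m(s)<1$; a Markov/sub-multiplicativity argument over blocks of $m$ steps then yields the geometric bound. Feeding this into the two gap expressions — directly for the probabilistic case, and via a tail estimate using $\sum_{j} j\,\gamma^{\lfloor j/m\rfloor}<\infty$ so that $\Eset^\sigma_{\game^C,s}(\tau\,\mathbf{1}[\tau>n])\to 0$ uniformly for the reward case — lets me choose $N$ making the gap at most $\varepsilon$ for all $n\ge N$, $s$, $\sigma$ and $i$.

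I expect the geometric-decay step to be the main obstacle, since it is where Assumption~\ref{game-assum} must be turned into a bound that is simultaneously uniform over the (infinite, history-dependent) strategy space and over states. The delicate points are the reduction of ``staying forever'' to the existence of an end component inside $S_{?}$, and the upgrade from the pointwise convergence $g_n(s)\to 0$ to a genuine geometric rate, which relies essentially on the finiteness of $S$.
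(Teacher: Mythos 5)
Your proof is correct and is, in essence, the argument the paper leaves implicit: its own proof of this lemma is a one-line appeal to Definitions~7 and~8 and Assumption~\ref{game-assum}, so your pointwise domination for the lower bound and the uniform geometric decay of the $n$-step ``staying'' probability (no end components inside the maybe-region for the probabilistic case, almost-sure reachability of the target for the reward case, plus finiteness of $S$ and sub-multiplicativity over blocks of $m$ steps) supply exactly the missing detail. The one step worth spelling out fully in a write-up is the reduction from ``remaining unabsorbed forever with positive probability'' to the existence of a (necessarily terminal, hence contradiction-producing) end component inside $S_{?}$, since that is precisely where the no-non-terminal-end-component clause of Assumption~\ref{game-assum} is consumed.
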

\begin{proof}
The result follows from \defdefref{sem-def}{bounded-objective-def} and \assumref{game-assum}. \qed
\end{proof}
In the proof of correctness we will use the fact that $n$ iterations of value iteration is equivalent to performing backwards induction on the following game trees.
\begin{definition}\label{trees-def}
For any state $s$ and $n \in \Nset$, let $\game^C_{n,s}$ be the game tree corresponding to playing $\game^C$ for $n$ steps when starting from state $s$ and then terminating. 
\end{definition}
We can map any strategy profile $\sigma$ of $\game^C$ to a strategy profile of $\game^C_{n,s}$ by only considering the choices of the profile over the first $n$ steps when starting from state $s$. This mapping is clearly surjective, i.e.\ we can generate all profiles of $\game^C_{n,s}$, but is not injective.
We also need the following objectives corresponding to the values computed during value iteration for the game trees of \defref{trees-def}.
\begin{definition}\label{value-objective-def}
For any probabilistic or expected reachability non-zero sum formula $\theta$, $s \in S$, $n \in \Nset$, $1{\leq}i{\leq}2$, let $Y^\theta_i$ be the objective where for any path $\pi$ of $\game^C_{n,s}:$
\begin{align*}
\lefteqn{\hspace{-0.2cm} Y^{\probop{}{\future \! \phi^1}{+}\probop{}{\future \! \phi^2}}_i(\pi) =} \\
 & \left\{ \begin{array}{cl}
1 & \; \mbox{if $\exists m {\leq} n . \, (\pi(m) \sat \phi^i \wedge \forall k {<} m . \, \pi(k) \notsat \phi^i \wedge \phi^j)$} \\
\probopP^{\max}_{\game,\pi(m)}(\future \phi^i) & \; \mbox{else if $\exists m {\leq} n . \, (\pi(m) \sat \phi^j \wedge \forall k {<} m . \, \pi(k) \notsat \phi^i \wedge \phi^j)$}\\
0 & \; \mbox{otherwise} 
\end{array} \right.
\\
\lefteqn{\hspace{-0.2cm} Y^{\rewop{r_{\scale{.75}{1}}}{}{\future \! \phi^1}{+}\rewop{r_{\scale{.75}{2}}}{}{\future \! \phi^2}}_i(\pi) =} \\
& \left\{ \begin{array}{cl}
\infty
& \; \mbox{if} \; \forall k {\leq} n . \, \pi(k) \notsat \phi \\
\mbox{$\sum_{k=0}^{k_\phi}$} \big( r_A(\pi(k),\pi[k]) + r_S(\pi(k)) \big) + r'_S(\pi(k)) & \; \mbox{otherwise}
\end{array} \right.
\end{align*}
where
\[
r'_S(s) \; = \; \left\{ \begin{array}{cl}
\rewopR^{\max}_{\game,\pi(m)}(r_i,\future \phi^{i}) & \; \mbox{if $s \sat \neg \phi^i \wedge \phi^j$}\\
0 & \; \mbox{otherwise} 
\end{array} \right.
\]
$k_\phi = \min \{ k{-}1 \mid k {\leq} n \wedge \pi(k) \sat \phi^1 \vee \phi^2 \}$ and $j= i{+}1 \bmod 2$.
\end{definition}
Similarly to \lemref{epsilon-lem}, the lemma below demonstrates for a fixed strategy profile and state $s$ of $\game^C$, the values for the objectives given in \defref{value-objective-def} when played on the game trees $\game^C_{n,s}$ are non-decreasing and converge uniformly. As with \lemref{epsilon-lem} the result follows from \assumref{game-assum}.
\begin{lemma}\label{epsilon2-lem}
For any probabilistic or expected reachability non-zero sum formula $\theta$ and $\varepsilon{>}0$, there exists $N \in \Nset$ such that for any $m {\geq} n {\geq} N$, $\sigma \in \Sigma^1_{\game^C} {\times} \Sigma^2_{\game^C}$, $s \in S$ and $1{\leq}i{\leq}2:$
\[
0 \ \leq \ \Eset^{\sigma}_{\game^C_{m,s}}(Y^\theta_i) - \Eset^{\sigma}_{\game^C_{n,s}}(Y^\theta_i) \ \leq \ \varepsilon  \, .
\]
\end{lemma}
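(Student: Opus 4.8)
The plan is to follow the template of \lemref{epsilon-lem}, handling the probabilistic and the expected-reachability families separately and, in each, establishing the left inequality (monotonicity) and the right inequality (uniform convergence) independently. The convenient device is to regard both $\Eset^{\sigma}_{\game^C_{n,s}}(Y^\theta_i)$ and $\Eset^{\sigma}_{\game^C_{m,s}}(Y^\theta_i)$ as the expectations, over the infinite paths of $\game^C$ from $s$ under $\sigma$, of two functionals $\widehat{Y}^\theta_{i,n}$ and $\widehat{Y}^\theta_{i,m}$ that inspect only the first $n$ (resp.\ $m$) steps of a path and, where a target has been reached, append the horizon-independent MDP-optimal tail value $\probopP^{\max}_{\game,\cdot}$ or $\rewopR^{\max}_{\game,\cdot}$. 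This is sound because the truncation profile on $\game^C_{n,s}$ induces exactly the projection of $\Prob^{\sigma}_{\game^C,s}$ onto $n$-step prefixes, so the comparison collapses to comparing two functions on one probability space.

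First I would establish monotonicity. Writing $F = \Sat(\phi^1) \cup \Sat(\phi^2)$ and letting $\tau$ be the first step at which a path enters $F$, \defref{value-objective-def} is ``decided'' at $\tau$: if $\tau \le n$ the value is fixed by the prefix up to $\tau$ together with the MDP-optimal tail from $\pi(\tau)$ and is independent of the horizon, so $\widehat{Y}^\theta_{i,n}(\pi) = \widehat{Y}^\theta_{i,m}(\pi)$ for $\tau \le n \le m$; if $\tau > n$ then $\widehat{Y}^\theta_{i,n}(\pi)$ takes the base value $0$ while $\widehat{Y}^\theta_{i,m}(\pi) \ge 0$ (the never-reaching paths, which formally take value $\infty$ in the reward case, are null by \assumref{game-assum} and may be discarded). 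Hence $\widehat{Y}^\theta_{i,n} \le \widehat{Y}^\theta_{i,m}$ pointwise for $n \le m$, and integrating yields $\Eset^{\sigma}_{\game^C_{n,s}}(Y^\theta_i) \le \Eset^{\sigma}_{\game^C_{m,s}}(Y^\theta_i)$.

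For the upper bound I would note that the two functionals differ only on the event $\{\, n < \tau \le m \,\}$, so the difference is controlled by the mass of paths whose first visit to $F$ falls strictly after step $n$. In the probabilistic case this is at most $\sup_{\sigma,s}\Prob^{\sigma}_{\game^C,s}(n < \tau < \infty)$; I would make this vanish uniformly via \assumref{game-assum} by arguing that the set $S_?$ of non-$F$ states from which $F$ is reachable contains no end component (a terminal end component inside $F^c$ can never reach $F$, and non-terminal end components are excluded by the assumption), and that a region free of end components is escaped within $n$ steps with probability at least $1 - c\lambda^n$ for constants $c$ and $\lambda < 1$ that do not depend on the profile or state. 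In the expected-reachability case I would instead use that, by \assumref{game-assum}, every target is reached with probability $1$ under every profile and that the rewards of $r$ are bounded; the difference is then at most the expected reward accumulated, together with the bounded MDP-optimal tail, over paths with $\tau > n$, which tends to $0$ uniformly because almost-sure reachability on a finite state space yields a uniform exponential tail bound on $\tau$ and hence uniform integrability of the accumulated reward.

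The only place genuine work is required, and the main obstacle, is the uniformity of these estimates over all profiles $\sigma$ and states $s$ simultaneously: decay for each fixed $\sigma$ is immediate from continuity of the probability measure, but I must turn ``no non-terminal end components'' (respectively ``reached with probability $1$'') into a single geometric rate valid across the whole of $\Sigma^1_{\game^C} \times \Sigma^2_{\game^C}$. I would obtain this from the standard finite-MDP fact that an end-component-free region is left in a uniformly bounded expected number of steps, applied to the collaborative MDP underlying $\game^C$; for the reward case this simultaneously supplies the uniform bound on the expected total reward to reach the target, given boundedness of $r$.
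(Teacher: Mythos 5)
Your proposal is correct, but it is worth saying that the paper does not really contain a proof of this lemma to compare against: the appendix disposes of it in one sentence (``As with \lemref{epsilon-lem} the result follows from \assumref{game-assum}''), so what you have written is the argument the authors are implicitly relying on, made explicit. Your two key moves are exactly the right ones. First, recasting $\Eset^{\sigma}_{\game^C_{n,s}}(Y^\theta_i)$ as the expectation over infinite paths of $\game^C$ of a truncated functional decided at the first hitting time $\tau$ of $\Sat(\phi^1)\cup\Sat(\phi^2)$ makes the left inequality a pointwise comparison (horizon-independence once $\tau\leq n$, base value $0$ resp.\ a shorter non-negative sum when $\tau>n$), which the paper never spells out. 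Second, you correctly identify that the whole content of the lemma is the \emph{uniformity} over $\sigma$ and $s$, and that this is where \assumref{game-assum} enters: the set of non-target states from which a target remains reachable contains no end component (terminal ECs there would contradict reachability, non-terminal ones are excluded by the assumption), and an EC-free region is escaped with a geometric rate whose constants depend only on the game, not on the profile; the reward case additionally needs bounded per-step rewards and the bounded MDP-optimal tail to convert the geometric tail on $\tau$ into a vanishing bound on the residual expected reward. One caveat you handled by fiat but should flag: read literally, the $\infty$ clause of \defref{value-objective-def} fires for every $n$-step path that has not yet reached a target, which would make $\Eset^{\sigma}_{\game^C_{n,s}}(Y^\theta_i)$ infinite for small $n$ and the lemma vacuous; your reading (the clause is meant for paths that never reach the target, a null event under \assumref{game-assum}, with non-reaching finite prefixes contributing their accumulated reward, consistent with the value iteration recurrence) is the only one under which the statement makes sense, and your proof goes through under it.
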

We require the following lemma relating the values of the objectives $X^\theta_{i,n}$, $Y^\theta_i$ and $X^\theta_i$.
\begin{lemma}\label{precomp-lem}
For any probabilistic or expected reachability non-zero sum formula $\theta$, state $s$ of $\game^C$, strategy profile $\sigma$ such that when one of the targets of the objectives of $\theta$ is reached, the profile then collaborates to maximises the value of the other objective, $n \in \Nset$ and $1{\leq}i{\leq}2:$
\[
\sup\nolimits_{\sigma_i \in \Sigma_i^{\game^{\scale{.75}{C}}_{\scale{.75}{n,s}}}} \Eset^{\sigma_{-i}[\sigma_i]}_{\game^C,s}(X^\theta_{i,n}) \ \leq \ \sup\nolimits_{\sigma_i \in \Sigma_i^{\game^{\scale{.75}{C}}_{\scale{.75}{n,s}}}} \Eset^{\sigma_{-i}[\sigma_i]}_{\game^C_{n,s}}(Y^\theta_i)
\ \leq \
\sup\nolimits_{\sigma_i \in \Sigma_i^{\game^{\scale{.75}{C}}}} \Eset^{\sigma_{-i}[\sigma_i]}_{\game^C,s}(X^\theta_i) \, .
\]
\end{lemma}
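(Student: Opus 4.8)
The plan is to prove the two inequalities separately. In each case I would fix the deviating player $i$ and the remaining component $\sigma_{-i}$ of the collaborative profile, and then compare the three objectives $X^\theta_{i,n}$, $Y^\theta_i$ and $X^\theta_i$ \emph{conditionally}, after partitioning the paths according to the first step $m{\leq}n$ at which \emph{either} target is reached. The two facts I would lean on throughout are: first, that since $\probopP^{\max}$ (resp.\ $\rewopR^{\max}$) is a maximum over all strategies, the value of reaching $\phi^i$ from any state under \emph{any} continuation is bounded above by it; and second, that by \assumref{game-assum} the event ``a target is never reached'' has probability zero, so the $\infty$ clauses contribute nothing to the expectations.

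For the first inequality I would fix a tree strategy $\sigma_i \in \Sigma_i^{\game^C_{n,s}}$ and establish $\Eset^{\sigma_{-i}[\sigma_i]}_{\game^C,s}(X^\theta_{i,n}) \le \Eset^{\sigma_{-i}[\sigma_i]}_{\game^C_{n,s}}(Y^\theta_i)$, after which taking the supremum over the common index set $\Sigma_i^{\game^C_{n,s}}$ gives the claim. Partitioning by the first target hit: (a) if $\phi^i$ is reached first within $n$ steps, both objectives equal $1$ (probabilistic) or the reward accumulated up to that point (reward); (b) if the \emph{other} target $\phi^j$ is reached first at step $m$, then $Y^\theta_i$ is by definition the constant $\probopP^{\max}_{\game,\pi(m)}(\future\phi^i)$ (resp.\ the accumulated reward plus $\rewopR^{\max}_{\game,\pi(m)}(r_i,\future\phi^i)$), whereas the conditional expectation of $X^\theta_{i,n}$ given the prefix up to $m$ is the value of reaching $\phi^i$ within the \emph{remaining} horizon under the actual continuation, which is at most this maximum; (c) if no target is reached within $n$ steps both objectives vanish. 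Summing the conditional expectations over the partition yields the inequality for the fixed $\sigma_i$. Note that inequality~1 uses only the maximality of $\probopP^{\max}/\rewopR^{\max}$, not the collaboration hypothesis on $\sigma$.

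For the second inequality I would, given a tree strategy $\sigma_i$, construct a full-game strategy $\tilde\sigma_i \in \Sigma_i^{\game^C}$ that copies $\sigma_i$ for the first $n$ steps and, as soon as $\phi^j$ is reached before $\phi^i$, switches to a strategy optimal for $\future\phi^i$ (playing arbitrarily elsewhere). Because $\sigma_{-i}$ is, by the hypothesis on $\sigma$, already collaborating to maximise the remaining objective once a target is reached, the combined profile $\sigma_{-i}[\tilde\sigma_i]$ achieves \emph{exactly} $\probopP^{\max}_{\game,\pi(m)}(\future\phi^i)$ (resp.\ the matching expected reward) from the split state $\pi(m)$. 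Re-running the three-case partition then gives equality in cases (a) and (b) and $Y^\theta_i \le X^\theta_i$ in case (c), where the untruncated objective may still accrue value after step $n$ while $Y^\theta_i$ does not. Hence $\Eset^{\sigma_{-i}[\sigma_i]}_{\game^C_{n,s}}(Y^\theta_i) \le \Eset^{\sigma_{-i}[\tilde\sigma_i]}_{\game^C,s}(X^\theta_i) \le \sup_{\sigma_i' \in \Sigma_i^{\game^C}}\Eset^{\sigma_{-i}[\sigma_i']}_{\game^C,s}(X^\theta_i)$, and taking the supremum over $\sigma_i$ completes the argument.

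I expect the main obstacle to be the case-(b) bookkeeping: verifying that $Y^\theta_i$, which hard-codes the one-shot continuation value $\probopP^{\max}/\rewopR^{\max}$ at the split state, really dominates (inequality~1) or coincides with (inequality~2) the genuine conditional value of the truncated/untruncated objective. This is exactly where the collaboration hypothesis on $\sigma$ is indispensable: without it the continuation under $\sigma_{-i}$ could fall strictly below the maximum and inequality~2 would fail. A secondary technical point is the expected-reachability case, where the $\infty$ clauses in \defref{bounded-objective-def} and \defref{value-objective-def} must be discharged using \assumref{game-assum} (targets reached with probability $1$) so that all expectations are finite, after which the reward contributions are compared using non-negativity of $r_A$ and $r_S$ in precisely the same three-case manner as the probabilistic objectives.
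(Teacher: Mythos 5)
Your proposal is correct and takes essentially the same route as the paper: the paper's own proof is a one-line appeal to ``the restriction on the strategy profiles and Definitions~\ref{sem-def}, \ref{bounded-objective-def} and \ref{value-objective-def}'', and your three-case conditional-expectation argument (partitioning on the first time a target is hit, bounding case (b) by the maximality of $\probopP^{\max}$/$\rewopR^{\max}$ for the first inequality, and invoking the collaboration hypothesis to turn that bound into an equality for the second) is exactly the omitted detail. Your observation that the first inequality does not need the collaboration hypothesis, and your construction of $\tilde\sigma_i$ extending a tree strategy to $\game^C$, are both consistent with what the paper's definitions require.
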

\begin{proof}
The proof follows from the restriction on the strategy profiles and Definitions~\ref{sem-def}, \ref{bounded-objective-def} and \ref{value-objective-def}. \qed
\end{proof}
We now define the strategy profiles synthesised during value iteration.
\begin{definition}
For any $n \in \Nset$ and $s \in S$, let $\sigma^{n,s}$ be the strategy profile generated for the game tree $\game^C_{n,s}$ (when considering value iteration as backwards induction) and $\sigma^{n,\star}$ be the synthesised strategy profile for $\game^C$ after $n$ iterations.
\end{definition}
Before giving the proof of correctness we require the following results.
\begin{lemma}\label{backwards-lem}
For any state $s$ of $\game^C$, probabilistic or expected reachability non-zero sum formula $\theta$ and $n \in \Nset$ we have that $\sigma^{n,s}$ is a subgame perfect SWNE of the CSG $\game^C_{n,s}$ for the objectives $(Y^{\theta_1},Y^{\theta_2})$.
\end{lemma}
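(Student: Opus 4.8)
The plan is to prove the statement by induction on the horizon $n$, using the fact (\defref{trees-def}) that $n$ iterations of value iteration coincide with backwards induction on the finite game tree $\game^C_{n,s}$. Concretely, $\sigma^{n,s}$ plays, at each internal node, an SWNE of the bimatrix game whose payoff entries are the SWNE values already computed for the children, and it matches the terminal values baked into $Y^\theta_i$ (\defref{value-objective-def}) once one of the targets is reached. Throughout I would exploit that, by \defref{value-objective-def}, as soon as a path hits a state satisfying $\phi^1$ or $\phi^2$ the values $Y^\theta_1,Y^\theta_2$ are fixed (the remaining single objective being encoded as the collaborative $\probopP^{\max}_{\game,\cdot}(\future\phi^i)$, resp.\ $\rewopR^{\max}_{\game,\cdot}(r_i,\future\phi^i)$, term). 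This is what makes the recursion well founded and ties the tree objectives to \lemref{probrew-lem}.

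For the base case $n=0$ the tree $\game^C_{0,s}$ is the single terminating state $s$, so the empty profile is trivially the unique, hence subgame perfect SWNE, profile, with the constant values prescribed by \defref{value-objective-def}. For the inductive step I would first dispose of the states $s$ with $s\sat\phi^1$ or $s\sat\phi^2$: here the objective $Y^\theta_i$ is already determined along every continuation (by \lemref{probrew-lem} these values are exactly $(1,1)$, $(1,\probopP^{\max}_{\game,s}(\future\phi^2))$, $(\probopP^{\max}_{\game,s}(\future\phi^1),1)$, or their reward analogues), so the equilibrium and social-welfare conditions hold trivially in such subgames, and the MDP-optimal continuation prescribed by $\sigma^{n,s}$ is only needed for the subsequent synthesis step (\lemref{precomp-lem}).

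The principal case is $s\notsat\phi^1$, $s\notsat\phi^2$ with $n>0$, where $\V_{\game^C}(s,\theta,n)=\val(\mgame_1,\mgame_2)$ for the bimatrix game with $z^l_{ij}=\sum_{s'}\delta(s,(a_i,b_j))(s')\cdot v^{s',l}_{n-1}$ and $(v^{s',1}_{n-1},v^{s',2}_{n-1})=\V_{\game^C}(s',\theta,n-1)$. The profile $\sigma^{n,s}$ plays the SWNE root choice $(x^\star,y^\star)$ and then continues as $\sigma^{n-1,s'}$ below each child $s'$. Subgame perfection in every proper subtree is immediate from the induction hypothesis applied to $\sigma^{n-1,s'}$ on $\game^C_{n-1,s'}$, so it only remains to verify the Nash condition at the root. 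For this I would show that, when the opponent fixes $y^\star$ and continues with $\sigma^{n-1,s'}_{-i}$, the supremum of player $i$'s value over \emph{any} deviation (root action together with arbitrary subtree play) equals $x^T\mgame_i y^\star$: by the induction hypothesis $\sigma^{n-1,s'}$ is a Nash equilibrium at each child, so $i$'s best attainable continuation value against $\sigma^{n-1,s'}_{-i}$ is exactly $v^{s',i}_{n-1}$ (trivially so at target children, where $Y^\theta_i$ is locked). Since $(x^\star,y^\star)$ is a Nash equilibrium of $(\mgame_1,\mgame_2)$, no deviation beats $x^{\star T}\mgame_i y^\star$, which gives the equilibrium property at $s$. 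Uniform convergence and the monotonicity needed to combine the per-node arguments come from \lemref{epsilon2-lem} and \assumref{game-assum}.

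The hard part, and the step I expect to be the main obstacle, is social-welfare optimality across the whole subtree rather than merely at the root. The danger is that a competing subgame perfect SWNE $\hat\sigma$ may realise at the children value vectors that differ from the SWNE vectors used to build $(\mgame_1,\mgame_2)$; even when, by the induction hypothesis, the child value \emph{sums} agree (so the sum-matrix $\mgame_1+\mgame_2$ is unchanged), the \emph{individual} matrices $\hat\mgame_1,\hat\mgame_2$ differ, and since the Nash set of a bimatrix game depends on the individual matrices one cannot simply bound $\hat x^T(\mgame_1+\mgame_2)\hat y$ by the SWNE value $\val(\mgame_1,\mgame_2)$. This is precisely the gap where an earlier argument was flawed. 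I would attack it by working with the definition of subgame perfect SWNE as an SWNE in \emph{every} subgame, reducing the claim to showing that no root mixing with strictly larger total value can itself be an equilibrium of the game induced by SWNE continuations, and would argue this using the additive structure of the reachability objectives (the welfare of a profile decomposes as the $\delta$-weighted sum of child welfares) together with the consistent tie-breaking used when selecting SWNE values; making this reconciliation of distinct child value splits fully rigorous is where the real work lies.
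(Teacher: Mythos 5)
Your proposal follows the same route as the paper, whose entire proof of this lemma is a two-sentence appeal to the facts that value iteration on $\game^C_{n,s}$ is backwards induction for the objectives $(Y^{\theta_1},Y^{\theta_2})$, that backwards induction returns a subgame perfect NE, and that SWNEs are selected at each node. Your reconstruction of the induction---the trivial base case at the leaves, the target states dispatched via \lemref{probrew-lem}, and the one-shot-deviation argument at the root using the induction hypothesis that player $i$'s best attainable continuation value against $\sigma^{n-1,s'}_{-i}$ is exactly $v^{s',i}_{n-1}$---is a correct and considerably more explicit rendering of what the paper leaves to its citations. (One minor remark: \lemref{epsilon2-lem} and \assumref{game-assum} are not needed here; the game tree is finite and the argument is purely combinatorial, so invoking uniform convergence at this point is superfluous.)

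The genuine gap is the one you flag yourself and do not close: social-welfare optimality of the composed profile among the subgame perfect NE of $\game^C_{n,s}$. A competing subgame perfect profile $\hat\sigma$ may realise at the children value vectors that are NE values of the subtrees but differ from the selected SWNE values; these different individual splits change the local bimatrix game at the root and may admit root equilibria of strictly larger total value, so the greedy selection of an SWNE at each stage does not obviously dominate. Your proposed repair---additive decomposition of welfare over children plus consistent tie-breaking---is only a sketch, and you concede that making it rigorous ``is where the real work lies,'' so as written the proposal establishes that $\sigma^{n,s}$ is a subgame perfect NE but not that it is social-welfare optimal. To be fair, the paper's own proof is no stronger on exactly this point: it simply asserts that ``value iteration selects SWNEs'' and stops, and it never pins down whether subgame perfect SWNE means welfare-maximal among all subgame perfect profiles or merely SWNE of each local stage game given the chosen continuations (under the latter reading your argument is already complete). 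You have therefore not missed an argument the paper supplies; you have correctly located the step at which the paper's justification is itself an assertion rather than a proof.
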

\begin{proof}
The result follows from the fact that the value iteration selects SWNEs, value iteration corresponds to performing backwards induction for the objectives $(Y^{\theta_1},Y^{\theta_2})$ and backwards induction returns a subgame perfect NE~\cite{SW+01,NMK+44}. \qed
\end{proof}
\begin{lemma}\label{strats-lem}
For any $\varepsilon{>}0$, there exists $N \in \Nset$ such that for any $s \in S$ and $1{\leq}i{\leq}2$:
\[
\big| \, \Eset^{\sigma^{n,\star}}_{\game^C,s}(X^\theta_i) - \Eset^{\sigma^{n,s}}_{\game^C_{n,s}}(Y^\theta_i) \, \big|
\ \leq \ \varepsilon \, .
\]
\end{lemma}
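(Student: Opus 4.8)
The plan is to show that the synthesised profile $\sigma^{n,\star}$ on $\game^C$ reproduces, up to a vanishing tail, the backwards-induction profile $\sigma^{n,s}$ on the finite tree $\game^C_{n,s}$. The starting observation is that, by construction, $\sigma^{n,\star}$ plays exactly the value-iteration (equivalently, backwards-induction) choices of $\sigma^{n,s}$ for the first $n$ steps from $s$, and that the moment one of the two targets is reached, $\sigma^{n,\star}$ switches to an MDP-optimal strategy that collaboratively maximises the remaining objective. This is precisely the behaviour encoded in $Y^\theta_i$ of \defref{value-objective-def}: on a tree path where player $i$'s own target is reached first the value is $1$ (resp.\ the accumulated reward), and on one where the other target is reached first at state $\pi(m)$ the value is the precomputed quantity $\probopP^{\max}_{\game,\pi(m)}(\future \phi^i)$ (resp.\ $\rewopR^{\max}_{\game,\pi(m)}(r_i,\future\phi^i)$).

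First I would fix $\varepsilon{>}0$ and partition the infinite paths from $s$ under $\sigma^{n,\star}$ into those on which some target is reached within the first $n$ steps and those on which none is. On the first set, a measure-preserving correspondence with the tree paths of $\game^C_{n,s}$, together with correctness of the MDP precomputation and strategy synthesis~\cite{KP13} (a CSG profile restricted after the first-hit time is an MDP strategy, and its collaborative continuation attains the optimum), gives that $\sigma^{n,\star}$ attains in expectation exactly the value assigned by $Y^\theta_i$; hence the two expectations agree when restricted to this set. On the second set, $Y^\theta_i$ contributes $0$ by \defref{value-objective-def}, whereas $X^\theta_i$ contributes a non-negative amount. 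Consequently $0 \leq \Eset^{\sigma^{n,\star}}_{\game^C,s}(X^\theta_i) - \Eset^{\sigma^{n,s}}_{\game^C_{n,s}}(Y^\theta_i)$ equals the expected value of $X^\theta_i$ over only those paths on which no target is reached within $n$ steps.

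It then remains to bound this residual uniformly in $s$. This is exactly the content of the truncation estimates: the residual is dominated by $\Eset^{\sigma^{n,\star}}_{\game^C,s}(X^\theta_i) - \Eset^{\sigma^{n,\star}}_{\game^C,s}(X^\theta_{i,n})$ with $X^\theta_{i,n}$ as in \defref{bounded-objective-def}, which by \lemref{epsilon-lem} is at most $\varepsilon$ for all $n {\geq} N$, for a single $N$ valid for every profile and every state (and the companion \lemref{epsilon2-lem} gives the same on the tree side). Crucially, the uniformity and the existence of such an $N$ rest on \assumref{game-assum}: the absence of non-terminal end components (probabilistic case) and reaching the targets with probability $1$ under all profiles (reward case) force the probability, resp.\ expected reward, accumulated beyond step $n$ without having hit a target to decay uniformly to $0$. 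Taking absolute values then yields the claim.

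The main obstacle I expect is making the restriction argument on the first set fully rigorous: one must justify, via the strong Markov property at the random first time a target is hit, that the expected continuation value under $\sigma^{n,\star}$ coincides with the single-objective optimum $\probopP^{\max}$ (resp.\ $\rewopR^{\max}$) baked into $Y^\theta_i$, and that the surjective-but-non-injective path correspondence between $\game^C$ and $\game^C_{n,s}$ preserves the relevant probability measure up to step $n$. Once this identification is in place, the remaining estimate is a direct appeal to \lemlemref{epsilon-lem}{epsilon2-lem} and \assumref{game-assum}.
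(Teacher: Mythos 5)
Your overall architecture --- match the two expectations on the event that some target is hit within the first $n$ steps, then bound the residual on the complementary event by the truncation estimate of \lemref{epsilon-lem} --- is a sensible way to finish, and your observation that \assumref{game-assum} is what makes the tail bound uniform in $s$ and in the profile is correct. However, there is a genuine gap at the very first step: you assert that, \emph{by construction}, $\sigma^{n,\star}$ plays exactly the choices of $\sigma^{n,s}$ for the first $n$ steps from $s$. That is not how the two profiles are defined. The tree profile $\sigma^{n,s}$ arises from backwards induction, so at depth $d$ of $\game^C_{n,s}$ it plays a SWNE of the bimatrix game built from the iteration-$(n{-}d{-}1)$ values $\V_{\game^C}(\cdot,\theta,n{-}d{-}1)$, whereas the synthesised profile $\sigma^{n,\star}$ for the infinite game is assembled from the equilibria computed at the final iteration (together with the MDP switching after a target is reached). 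These need not coincide at any depth $d \geq 1$, and closeness of the value vectors does not by itself force the \emph{selected equilibria} to agree, since equilibrium selection is not continuous in the payoff matrices. This is precisely why the paper's own proof has to invoke the convergence of value iteration (\propref{convergence-prop}) together with \lemref{epsilon2-lem}: one first chooses $N$ so that, for $n \geq N$, the choices of $\sigma^{n,s}$ agree with those of $\sigma^{n,\star}$ on sufficiently many initial steps, and only then controls the remainder. Your argument omits this appeal entirely, so the claimed exact agreement of the two expectations on the ``hit within $n$ steps'' event is unjustified, and the lemma does not follow from your premises as stated.

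Two smaller remarks. First, on the complementary event your claim that $Y^\theta_i$ contributes $0$ is right for the probabilistic case but does not match \defref{value-objective-def} as literally written for the reward case (where such paths are assigned $\infty$); you are reading the evidently intended definition rather than the stated one, which is defensible but should be flagged. Second, the parts of your argument that are solid --- the identification of the post-hit continuation value with the MDP optimum baked into $Y^\theta_i$, and the domination of the residual by $\Eset^{\sigma^{n,\star}}_{\game^C,s}(X^\theta_i) - \Eset^{\sigma^{n,\star}}_{\game^C,s}(X^\theta_{i,n})$ --- are in fact more explicit than anything the paper writes down, and would strengthen a corrected proof once the agreement-of-choices step is properly established via \propref{convergence-prop}.
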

\begin{proof}
Using \lemref{epsilon2-lem} and since value iteration converges (\propref{convergence-prop}), we can choose $N$ such that the choices of the profile $\sigma^{n,s}$ agree with those of $\sigma^{n,\star}$ for a sufficient number of steps such that the inequality holds. \qed
\end{proof}
The following proposition demonstrates value iteration convergences and depends on \assumref{game-assum}. Without this assumption convergence cannot be guaranteed as demonstrated by the counterexamples in \appref{2-app} and \appref{3-app}. Although value iteration converges, unlike value iteration for MDPs or zero-sum games, the generated sequence of values is not necessarily non-decreasing. 
\begin{proposition}\label{convergence-prop}
For any probabilistic or expected reachability non-zero sum formula $\theta$ and state $s$, the sequence $\langle \V_{\game^C}(s,\theta,n) \rangle_{n \in \Nset}$ converges.
\end{proposition}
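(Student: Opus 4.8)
The plan is to first reduce the claim to a statement about the finite game trees of \defref{trees-def}, and then to show that the (unique) social-welfare sum of their equilibria forms a convergent sequence by sandwiching it between a finite-horizon approximation and the true-game values, both of which converge to a common limit.

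First I would invoke \lemref{backwards-lem}: since $n$ iterations of value iteration coincide with backwards induction on $\game^C_{n,s}$, the pair $\V_{\game^C}(s,\theta,n)$ consists of subgame perfect SWNE values of the tree for the objectives $(Y^\theta_1,Y^\theta_2)$ of \defref{value-objective-def}, realised by $\sigma^{n,s}$; write $V^i_n(s) := \Eset^{\sigma^{n,s}}_{\game^C_{n,s}}(Y^\theta_i)$. As the individual SWNE values need not be unique but their sum is, I set $W_n := V^1_n(s)+V^2_n(s)$ and concentrate on proving $\langle W_n\rangle_{n\in\Nset}$ converges. This sequence is bounded — by $2$ in the probabilistic case, and by $\rewopR^{\max}_{\game,s}(r_1,\future\phi^1)+\rewopR^{\max}_{\game,s}(r_2,\future\phi^2)<\infty$ in the reward case, finiteness holding by \assumref{game-assum} — so it suffices to show it is Cauchy.

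The key two-sided estimate comes from \lemref{precomp-lem}. Backwards induction treats a state in which one target has been reached by collaborating to maximise the other objective (exactly the unique value identified in \lemref{probrew-lem}), so $\sigma^{n,s}$ meets the hypothesis of \lemref{precomp-lem}; moreover, being a Nash equilibrium of the tree, each player's value equals its best response, $V^i_n(s)=\sup_{\sigma_i}\Eset^{\sigma^{n,s}_{-i}[\sigma_i]}_{\game^C_{n,s}}(Y^\theta_i)$. Substituting into \lemref{precomp-lem} gives
\[
\sup\nolimits_{\sigma_i}\Eset^{\sigma^{n,s}_{-i}[\sigma_i]}_{\game^C,s}(X^\theta_{i,n}) \ \leq \ V^i_n(s) \ \leq \ \sup\nolimits_{\sigma_i}\Eset^{\sigma^{n,s}_{-i}[\sigma_i]}_{\game^C,s}(X^\theta_i),
\]
and by \lemref{epsilon-lem}, applied uniformly so that the two suprema differ by at most $\varepsilon$ once $n\geq N$, these outer bounds collapse to within $\varepsilon$. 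Thus each tree value $V^i_n(s)$ lies within $\varepsilon$ of the full-game best-response value against the frozen opponent $\sigma^{n,s}_{-i}$, which simultaneously controls the gap between the finite-horizon approximation and the true objective and certifies $\sigma^{n,\star}$ as an $\varepsilon$-equilibrium of $\game^C$.

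To close the argument I would compare indices $m>n\geq N$ using \lemref{epsilon2-lem}: for every suitably truncated profile the tree values $\Eset^{\sigma}_{\game^C_{n,s}}(Y^\theta_i)$ are non-decreasing in $n$ and converge uniformly, so mapping $\sigma^{m,s}$ down to $\game^C_{n,s}$ perturbs each coordinate by at most $\varepsilon$; combined with the best-response bounds above, this pins $|W_m-W_n|$ to $O(\varepsilon)$ and yields the Cauchy property, hence convergence of $\langle W_n\rangle$. I expect the main obstacle to be precisely what is flagged before \propref{convergence-prop}: neither the individual equilibrium values nor even their sum are monotone in $n$, so ordinary monotone-convergence reasoning is unavailable, and the whole weight of the proof rests on making the \lemref{precomp-lem} sandwich collapse uniformly while transferring best-response values between the finite tree $\game^C_{n,s}$ and the infinite game $\game^C$ as the equilibrium profile $\sigma^{n,s}$ itself drifts with $n$. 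A secondary subtlety is avoiding circularity with \lemref{strats-lem}, which is proved from this proposition, so every bound above must be taken directly from \lemref{precomp-lem}, \lemref{backwards-lem}, \lemref{probrew-lem}, \lemref{epsilon-lem} and \lemref{epsilon2-lem}; finally, upgrading convergence of the sum $W_n$ to convergence of the pair $\V_{\game^C}(s,\theta,n)$ would require the deterministic tie-breaking rule used when selecting SWNE values, together with the boundary uniqueness supplied by \lemref{probrew-lem}.
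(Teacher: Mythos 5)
There is a genuine gap, and it sits exactly where you flag the ``main obstacle'' without resolving it. The sandwich you build from \lemref{precomp-lem} and \lemref{epsilon-lem} pins $V^i_n(s)$ to within $\varepsilon$ of $\sup_{\sigma_i}\Eset^{\sigma^{n,s}_{-i}[\sigma_i]}_{\game^C,s}(X^\theta_i)$ --- but that anchor depends on $n$ through the frozen opponent $\sigma^{n,s}_{-i}$, so it does not place $V^i_n(s)$ near any fixed quantity. What this sandwich actually delivers is the content of the paper's correctness \emph{theorem} (that $\sigma^{n,\star}$ is a subgame perfect $\varepsilon$-NE), not convergence. Your closing step then asks \lemref{epsilon2-lem} to do work it cannot do: that lemma controls the change in value of a \emph{fixed} profile between $\game^C_{n,s}$ and $\game^C_{m,s}$, but the restriction of $\sigma^{m,s}$ to depth $n$ need not be an equilibrium of $\game^C_{n,s}$ (nor does $\sigma^{n,s}$ extend to one of $\game^C_{m,s}$), so ``perturbs each coordinate by at most $\varepsilon$'' says nothing about the distance between the SWNE values of the two trees. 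Equilibrium values are not continuous under small payoff perturbations without a further argument --- precisely the phenomenon the oscillating counterexamples in the appendices exhibit --- so $|W_m-W_n|=O(\varepsilon)$ does not follow from the bounds you have assembled.

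The paper takes a different route that confronts this head-on: it recasts each $\game^C_{n,s}$ as an infinite-action NFG $\nfgame_{n,s}$ over the \emph{common} strategy space of $\game^C$ (using surjectivity of the restriction map), invokes \lemref{epsilon2-lem} to conclude that the utility functions of $\langle\nfgame_{n,s}\rangle_n$ converge uniformly, and then deduces that the subgame perfect SWNE values of this uniformly convergent sequence of games converge. That final deduction is the step carrying the real weight of the proposition, and your argument never reaches an analogue of it. Two smaller points: you prove (at best) convergence of the sum $W_n$ while the proposition asserts convergence of the pair --- the paper's proof addresses the pair directly --- and your care in avoiding \lemref{strats-lem} (which presupposes this proposition) is correct and worth keeping in any repair.
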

\begin{proof}
For any state $s$ and $n \in \Nset$ we can consider $\game^C_{n,s}$ as two-player infinite-action NFG $\nfgame_{n,s}$ where for $1 {\leq} i {\leq} 2$:
\begin{itemize}
\item
the set of actions of player $i$ equals the set of strategies of player $i$ in $\game^C$;
\item for the action pair $(\sigma_1,\sigma_2)$, the utility function for player $i$ returns $\Eset^{\sigma}_{\game^C_{n,s}}(Y^\theta_i)$.
\end{itemize}
The correctness of this construction rely on the mapping of strategy profiles from $\game^C$ to $\game^C_{n,s}$ being surjective.  Using \lemref{epsilon2-lem}, we have that sequence $\langle \nfgame_{n,s} \rangle_{n \in N}$ of NFGs converge uniformly, and therefore, since $\V_{\game^C}(s,\theta,n)$ are subgame perfect SWNE values of $\game^C_{n,s}$ (see \lemref{backwards-lem}), the sequence $\langle \V_{\game^C}(s,\theta,n) \rangle_{n \in \Nset}$ also converges. \qed
\end{proof}
A similar convergence result to \propref{convergence-prop} has been shown for discounted properties in~\cite{FL83}.
\begin{theorem}
For a given probabilistic or expected reachability non-zero sum formula $\theta$ and $\varepsilon{>}0$, there exists $N \in \Nset$ such that for any $n {\geq} N$ the strategy profile $\sigma^{n,\star}$ is a subgame perfect $\varepsilon$-SWNE for $\game^C$ and the objectives $(X^{\theta_1},X^{\theta_2})$.
\end{theorem}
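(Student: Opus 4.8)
The plan is to establish the subgame perfect $\varepsilon$-NE inequality at every state $s$ and for each player $i$, and then to argue social welfare optimality separately. The whole argument is a \emph{squeeze}: I relate the value of the synthesised profile $\sigma^{n,\star}$ in the infinite-horizon game $\game^C$ (for the true objectives $X^\theta_i$) to the value of the backwards-induction profile $\sigma^{n,s}$ on the finite tree $\game^C_{n,s}$ (for the auxiliary objectives $Y^\theta_i$), where \lemref{backwards-lem} already gives an \emph{exact} subgame perfect SWNE. At states where a target already holds, \lemref{probrew-lem} identifies the unique subgame perfect SWNE values and the equilibrium condition holds exactly, so it suffices to treat the remaining states by the squeeze below.

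First I would fix $\varepsilon>0$ and invoke \lemref{epsilon-lem} and \lemref{strats-lem} (the latter resting on \lemref{epsilon2-lem} and \propref{convergence-prop}) with $\varepsilon/3$ in place of $\varepsilon$, taking $N$ to be the largest resulting threshold, so that for all $n\ge N$ every approximation bound holds simultaneously and \emph{uniformly} over strategy profiles. The core step bounds a player's best deviation. Since $X^\theta_{i,n}$ depends only on the first $n$ steps of a path, a best response against the fixed opponent choices $\sigma^{n,\star}_{-i}$ needs optimising only over those $n$ steps, so the supremum over $\Sigma_i^{\game^C}$ of $X^\theta_{i,n}$ equals the supremum over the tree strategies $\Sigma_i^{\game^C_{n,s}}$. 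Combining this with the uniform tail bound of \lemref{epsilon-lem} (which turns the infinite-horizon deviation value for $X^\theta_i$ into the truncated value for $X^\theta_{i,n}$ up to $\varepsilon/3$), the left inequality of \lemref{precomp-lem} (applicable since $\sigma^{n,\star}$ collaborates once a target is reached), the best-response identity from \lemref{backwards-lem}, and finally \lemref{strats-lem}, I obtain
\begin{align*}
\sup_{\sigma_i\in\Sigma_i^{\game^C}}\Eset^{\sigma^{n,\star}_{-i}[\sigma_i]}_{\game^C,s}(X^\theta_i)
&\ \le\ \sup_{\sigma_i\in\Sigma_i^{\game^C_{n,s}}}\Eset^{\sigma^{n,\star}_{-i}[\sigma_i]}_{\game^C,s}(X^\theta_{i,n})+\tfrac{\varepsilon}{3} \\
&\ \le\ \sup_{\sigma_i\in\Sigma_i^{\game^C_{n,s}}}\Eset^{\sigma^{n,s}_{-i}[\sigma_i]}_{\game^C_{n,s}}(Y^\theta_i)+\tfrac{\varepsilon}{3}
\ =\ \Eset^{\sigma^{n,s}}_{\game^C_{n,s}}(Y^\theta_i)+\tfrac{\varepsilon}{3}
\ \le\ \Eset^{\sigma^{n,\star}}_{\game^C,s}(X^\theta_i)+\tfrac{2\varepsilon}{3}\,,
\end{align*}
which rearranges to $\Eset^{\sigma^{n,\star}}_{\game^C,s}(X^\theta_i)\ge \sup_{\sigma_i}\Eset^{\sigma^{n,\star}_{-i}[\sigma_i]}_{\game^C,s}(X^\theta_i)-\varepsilon$ for every $s$ and $i$, i.e.\ the subgame perfect $\varepsilon$-NE property.

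For the social welfare refinement I would use that value iteration selects SWNE values in each bimatrix game, so by \lemref{backwards-lem} the finite-tree profiles are SWNE and, by \propref{convergence-prop}, the sums $\V_{\game^C}(s,\theta,n)$ converge to the SWNE value sum of $\game^C$; together with \lemref{strats-lem} this shows that the value sum achieved by $\sigma^{n,\star}$ is within $\varepsilon$ of optimal among equilibria, upgrading the $\varepsilon$-NE to a subgame perfect $\varepsilon$-SWNE.

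I expect the main obstacle to be making the correspondence between $\sigma^{n,\star}$ (defined on all of $\game^C$) and $\sigma^{n,s}$ (defined on the tree) fully rigorous — in particular justifying that the two profiles agree on the first $n$ steps so that \lemref{precomp-lem} and \lemref{strats-lem} can be chained, and that the deviation supremum is genuinely uniform. This uniformity — that the tail beyond $n$ steps contributes at most $\varepsilon$ \emph{simultaneously for every deviating strategy} — is precisely where \assumref{game-assum} is indispensable, since it underpins the uniform convergence in \lemref{epsilon-lem} and \lemref{epsilon2-lem}.
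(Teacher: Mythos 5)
Your proposal is correct and follows essentially the same route as the paper's proof: the same squeeze chaining \lemref{epsilon-lem}, \lemref{precomp-lem}, \lemref{backwards-lem} and \lemref{strats-lem}, merely written in the reverse direction with an $\varepsilon/3$ rather than $\varepsilon/2$ split, and with the same closing appeal to SWNE selection in the bimatrix games for social welfare optimality. The obstacle you flag --- rigorously identifying $\sigma^{n,\star}$ restricted to $n$ steps with $\sigma^{n,s}$ so that \lemref{precomp-lem} and \lemref{strats-lem} can be chained --- is likewise left implicit in the paper's own argument.
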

\begin{proof}
Consider any $\varepsilon{>}0$. From \lemref{strats-lem} there exists $N_1 \in \Nset$ such that for any $s\in S$ and $n {\geq} N_1$:
\begin{equation}\label{1-eqn}
\big| \, \Eset^{\sigma^{n,\star}}_{\game^C,s}(X^\theta_i) - \Eset^{\sigma^{n,s}}_{\game^C_{n,s}}(Y^\theta_i) \, \big|
\ \leq \ \mbox{$\frac{\varepsilon}{2}$} \, .
\end{equation}
For any $m \in \Nset$ and $s \in S$, using \lemref{backwards-lem} we have that $\sigma^{m,s}$ is a NE of $\game^C_{m,s}$, and therefore for any $m \in \Nset$, $s\in S$ and $1{\leq}i{\leq}2$:
\begin{equation}\label{nash-eqn}
\Eset^{\sigma^{m,s}}_{\game^{C}_{m,s}}(Y^\theta_i) 
\ \geq \
\sup\nolimits_{\sigma_i \in \Sigma_i^{\game^{\scale{.75}{C}}_{\scale{.75}{m,s}}}} \Eset^{\sigma^{m,s}_{-i}[\sigma_i]}_{\game^C_{m,s}}(Y^\theta_i) \, .
\end{equation}
From \lemref{epsilon-lem} there exists $N_2 \in \Nset$ such that for any $n {\geq} N_2$, $s \in S$ and $1{\leq}i{\leq}2$:
\begin{equation}\label{2-eqn}
\sup\nolimits_{\sigma_i \in \Sigma_i^{\game^{\scale{.75}{C}}}} \Eset^{\sigma^{n,\star}_{-i}[\sigma_i]}_{\game^C,s}(X^\theta_i) - \sup\nolimits_{\sigma_i \in \Sigma_i^{\game^{\scale{.75}{C}}}} \Eset^{\sigma^{n,\star}_{-i}[\sigma_i]}_{\game^C,s}(X^\theta_{i,n}) 
\ \leq \ \mbox{$\frac{\varepsilon}{2}$} \, .
\end{equation}
By construction $\sigma^{n,\star}$ is a profile for which if one of the targets of the objectives of $\theta$ is reached the profile maximises the value of the objective, we can rearrange \eqnref{2-eqn} and apply \lemref{precomp-lem} to yield for any $n {\geq} N_2$, $s \in S$ and $1{\leq}i{\leq}2$:
\begin{equation}\label{3-eqn}
\sup\nolimits_{\sigma_i \in \Sigma_i^{\game^{\scale{.75}{C}}_{\scale{.75}{n,s}}}} \Eset^{\sigma^{n,s}_{-i}[\sigma_i]}_{\game^C_{n,s}}(Y^\theta_i)
\ \geq \
\sup\nolimits_{\sigma_i \in \Sigma_i^{\game^{\scale{.75}{C}}}} \Eset^{\sigma^{n,\star}_{-i}[\sigma_i]}_{\game^C,s}(X^\theta_i) - \mbox{$\frac{\varepsilon}{2}$} \, .
\end{equation}
Letting $N = \max \{ N_1 , N_2 \}$, for any $n {\geq} N$, $s \in S$ and $1{\leq}i{\leq}2$:
\begin{align*}
\Eset^{\sigma^{n,\star}}_{\game^C,s}(X^\theta_i) \ \ & \geq \ \ \Eset^{\sigma^{n,s}}_{\game^C_{n,s}}(Y^\theta_i) - \mbox{$\frac{\varepsilon}{2}$} & \mbox{by \eqnref{1-eqn} since $n {\geq}N_1$} \\
&\geq \ \ \sup\nolimits_{\sigma_i \in \Sigma_i^{\game^{\scale{.75}{C}}_{\scale{.75}{n,s}}}} \Eset^{\sigma^{n,s}_{-i}[\sigma_i]}_{\game^C_{n,s}}(Y^\theta_i) - \mbox{$\frac{\varepsilon}{2}$} & \mbox{by \eqnref{nash-eqn}} \\
&\geq \ \ \left( \sup\nolimits_{\sigma_i \in \Sigma_i^{\game^{\scale{.75}{C}}}} \Eset^{\sigma^{n,\star}_{-i}[\sigma_i]}_{\game^C,s}(X^\theta_i) - \mbox{$\frac{\varepsilon}{2}$} \right) - \mbox{$\frac{\varepsilon}{2}$} & \mbox{by \eqnref{3-eqn} since $n {\geq} N_2$} \\
& = \ \ \sup\nolimits_{\sigma_i \in \Sigma_i^{\game^{\scale{.75}{C}}}} \Eset^{\sigma^{n,\star}_{-i}[\sigma_i]}_{\game^C,s}(X^\theta_i) - \varepsilon
\end{align*}
and hence, since $\varepsilon{>}0$, $s \in S$ and $1{\leq}i{\leq}2$ were arbitrary, $\sigma^{n,\star}$ is a subgame perfect $\varepsilon$-NE. It remains to show the strategy profile is subgame perfect social welfare optimal $\varepsilon$-NE which follows from the fact that when solving the bimatrix games during value iteration social welfare optimal NE are returned. \qed
\end{proof}

\section{Convergence of Probabilistic Reachability Properties}\label{2-app}

\begin{figure}[t]
\centering
\begin{tikzpicture}[->,>=stealth',shorten >=1pt,auto,node distance=2.8cm, semithick, scale=.40]

  \tikzstyle{every state}=[draw=black,text=black, initial text=]
\small 
\node[initial,state,minimum width=0.75cm,minimum height=0.75cm] (S)at(0,0) (s0) {$s_1$}; 

\node[state,minimum width=0.75cm,minimum height=0.75cm] (S)at(8,0) (s1) {$s_2$}; 

\node[state,minimum width=0cm,minimum height=0cm] (S)at(0,-4) (d1) {}; 

\node[state,minimum width=0cm,minimum height=0cm] (S)at(8,-4) (d2) {}; 

\node[state,minimum width=0.75cm,minimum height=0.75cm] (S)at(-2,-8) (t11) {$t_1$}; 

\node[state,minimum width=0.75cm,minimum height=0.75cm] (S)at(2,-8) (t12) {$t_2$}; 

\node[state,minimum width=0.75cm,minimum height=0.75cm] (S)at(6,-8) (t21) {$t_1$}; 

\node[state,minimum width=0.75cm,minimum height=0.75cm] (S)at(10,-8) (t22) {$t_2$};

\path [->] (s0.north east) [bend left]
edge node []  {$c,\bot$} (s1.north west);

\path [->] (s1.south west) [bend left]
edge node []  {$\bot,c$} (s0.south east);

\path [->] (s0.south) []
edge node [swap]  {$s,\bot$} (d1);

\path [->] (s1.south) []
edge node []  {$\bot,s$} (d2);

\path [->] (d1) []
edge node [swap]  {$\frac{1}{4}$} (t11.north);

\path [->] (d1) []
edge node []  {$\frac{3}{4}$} (t12.north);

\path [->] (d2) []
edge node [swap]  {$\frac{3}{4}$} (t21.north);

\path [->] (d2) []
edge node []  {$\frac{1}{4}$} (t22.north);

\end{tikzpicture} 
\caption{Counterexample for probabilistic reachability non-zero sum formula}\label{counter3-fig}
\end{figure}

We now present an example demonstrating that without the requirement that there are no non-terminal end components, i.e.\ when \assumref{game-assum} does not hold, value iteration can fail to converge for probabilistic reachability properties.

Consider the CSG in \figref{counter3-fig} (an adapted of a TSG example from~\cite{BMS16}) and the non-zero sum formula $\theta=\probop{}{\future \ap_1}{+}\probop{}{\future \ap_2}$ where $\ap_i$ is the atomic proposition satisfied only by the state $t_i$. Clearly this CSG has a non-terminal end component as one can remain in $\{ s_1 , s_2 \}$ indefinitely or leave at any time.

Applying the value iteration algorithm of \sectref{mc-sect} we have:
\begin{itemize}
\item In the first iteration $\V_{\game^C}(s_1,\theta,1)$ are the SWNE values of the bimatrix game:
\[
\mgame_1 =\bordermatrix{ ~ & \bot \cr
c & 0 \cr
s & \frac{1}{4}  \cr
} 
\quad \mbox{and} \quad
\mgame_2 =\bordermatrix{ ~ & \bot \cr
c & 0 \cr
s & \frac{3}{4}  \cr
}
\]
i.e. the values $(\frac{1}{4},\frac{3}{4})$, and $\V_{\game^C}(s_2,\theta,1)$ are the SWNE values of the bimatrix game:
\[
\mgame_1 =\bordermatrix{ ~ & c & s \cr
\bot & 0 & \frac{3}{4} \cr
}
\quad \mbox{and} \quad
\mgame_2 =\bordermatrix{ ~ & c & s \cr
\bot & 0 & \frac{1}{4} \cr
}
\]
i.e. the values $\
{(\frac{3}{4},\frac{1}{4})}$.
\item In the second iteration $\V_{\game^C}(s_1,\theta,2)$ are the SWNE values of the bimatrix game:
\[
\mgame_1 =\bordermatrix{ ~ & \bot \cr
c & \frac{3}{4}  \cr
s & \frac{1}{4}  \cr
}
\quad \mbox{and} \quad
\mgame_2 =\bordermatrix{ ~ & \bot \cr
c & \frac{1}{4} \cr
s & \frac{3}{4}  \cr
}
\]
i.e. the values $(\frac{3}{4},\frac{1}{4})$, and $\V_{\game^C}(s_2,\theta,2)$ are the SWNE values of the bimatrix games:
\[
\mgame_1 =\bordermatrix{ ~ & c & s \cr
\bot & \frac{1}{4} & \frac{3}{4} \cr
}
\quad \mbox{and} \quad
\mgame_2 = \bordermatrix{ ~ & c & s \cr
\bot & \frac{3}{4} & \frac{1}{4} \cr
}
\]
i.e. the values $(\frac{1}{4},\frac{3}{4})$.
\item 
In the third iteration $\V_{\game^C}(s_1,\theta,3)$ are the SWNE values of the bimatrix game:
\[
\mgame_1 =\bordermatrix{ ~ & \bot \cr
c & \frac{1}{4}  \cr
s & \frac{1}{4}  \cr
}
\quad \mbox{and} \quad
\mgame_2 =\bordermatrix{ ~ & \bot \cr
c & \frac{3}{4} \cr
s & \frac{3}{4}  \cr
}
\]
i.e. the values $(\frac{1}{4},\frac{3}{4})$, and $\V_{\game^C}(s_2,\theta,3)$ are the SWNE values of the bimatrix game:
\[
\mgame_1 =\bordermatrix{ ~ & c & s \cr
\bot & \frac{3}{4} & \frac{3}{4} \cr
}
\quad \mbox{and} \quad
\mgame_2 = \bordermatrix{ ~ & c & s \cr
\bot & \frac{1}{4} & \frac{1}{4} \cr
}
\]
i.e. the values $(\frac{3}{4},\frac{1}{4})$. 
\item
In the fourth iteration $\V_{\game^C}(s_1,\theta,4)$ are the SWNE values of the bimatrix game:
\[
\mgame_1 =\bordermatrix{ ~ & \bot \cr
c & \frac{3}{4}  \cr
s & \frac{1}{4}  \cr
}
\quad \mbox{and} \quad
\mgame_2 =\bordermatrix{ ~ & \bot \cr
c & \frac{1}{4} \cr
s & \frac{3}{4}  \cr
}
\]
i.e. the values $(\frac{3}{4},\frac{1}{4})$, and $\V_{\game^C}(s_2,\theta,4)$ are the SWNE values of the bimatrix game:
\[
\mgame_1 =\bordermatrix{ ~ & c & s \cr
\bot & \frac{3}{4}  & \frac{3}{4} \cr
}
\quad \mbox{and} \quad
\mgame_2 = \bordermatrix{ ~ & c & s \cr
\bot & \frac{1}{4} & \frac{1}{4} \cr
}
\]
i.e. the values $(\frac{3}{4},\frac{1}{4})$.
\end{itemize}
As can be seen the values computed at each iteration for the states $s_1$ and $s_2$ will oscillate between $(\frac{1}{4},\frac{3}{4})$ and $(\frac{3}{4},\frac{1}{4})$.
\section{Convergence of Expected Reachability Properties}\label{3-app}

\begin{figure}[t]
\begin{subfigure}{.4\textwidth}
\centering
\begin{tikzpicture}[->,>=stealth',shorten >=1pt,auto,node distance=2.8cm, semithick, scale=.40]
  \tikzstyle{every state}=[draw=black,text=black, initial text=]
\small 
\node[initial,state,minimum width=0.75cm,minimum height=0.75cm] (S)at(0,0) (s0) {$s_1$}; 

\node[state,minimum width=0.75cm,minimum height=0.75cm] (S)at(8,0) (s1) {$s_2$}; 

\node[state,minimum width=0.75cm,minimum height=0.75cm] (S)at(0,-4) (s2) {$t_1$}; 

\node[state,minimum width=0.75cm,minimum height=0.75cm] (S)at(8,-4) (s3) {$t_2$}; 

\path [->] (s0.north east) [bend left]
edge node []  {$c,\bot$} (s1.north west);

\path [->] (s1.south west) [bend left]
edge node []  {$\bot,c$} (s0.south east);

\path [->] (s0.south) []
edge node [swap]  {$s,\bot$} (s2.north);

\path [->] (s1.south) []
edge node []  {$\bot,s$} (s3.north);

\end{tikzpicture} 
\end{subfigure}
\begin{subfigure}{.6\textwidth}
\begin{align*}
\\
r_A^1(s, (a_1,a_2) ) & = \; \left\{ \begin{array}{cl}
\frac{1}{3} & \mbox{if $s=s_1$ and $(a_1,a_2)=(s,\bot)$} \\
2 & \mbox{if $s=s_2$ and $(a_1,a_2)=(\bot,s)$} \\
0 & \mbox{otherwise}
\end{array} \right. \\
r_A^2(s, (a_1,a_2) ) & = \; \left\{ \begin{array}{cl}
1 & \mbox{if $s=s_1$ and $(a_1,a_2)=(s,\bot)$} \\
\frac{1}{3} & \mbox{if $s=s_2$ and $(a_1,a_2)=(\bot,s)$} \\
0 & \mbox{otherwise}
\end{array} \right. \\
\end{align*}
\end{subfigure}
\caption{Counterexample for expected reachability non-zero sum formula}\label{counter1-fig}
\end{figure}

In this section we present an example demonstrating that without the requirement that the targets are reached with probability 1 under all strategy profiles, i.e.\ when \assumref{game-assum} does not hold, value iteration can fail to converge for expected reachability properties.

Consider the CSG in \figref{counter1-fig} (which again is an adaption of a TSG example from~\cite{BMS16}) and the formula $\theta=\rewop{r_1}{}{\future \ap}{+}\rewop{r_2}{}{\future \ap}$ where $\ap$ is the atomic proposition satisfied only by the states $t_1$ and $t_2$. Clearly there are strategy profiles for which the targets are not reached with probability 1. 

Applying the value iteration algorithm of \sectref{mc-sect} we have:
\begin{itemize}
\item In the first iteration $\V_{\game^C}(s_1,\theta,1)$ are the SWNE values of the bimatrix game:
\[
\mgame_1 =\bordermatrix{ ~ & \bot \cr
c & 0 \cr
s & \frac{1}{3}  \cr
} 
\quad \mbox{and} \quad
\mgame_2 =\bordermatrix{ ~ & \bot \cr
c & 0 \cr
s & 1  \cr
}
\]
i.e. the values $(\frac{1}{3},1)$, and $\V_{\game^C}(s_2,\theta,1)$ are the SWNE values of the bimatrix game:
\[
\mgame_1 =\bordermatrix{ ~ & c & s \cr
\bot & 0 & 2 \cr
}
\quad \mbox{and} \quad
\mgame_2 =\bordermatrix{ ~ & c & s \cr
\bot & 0 & \frac{1}{3} \cr
}
\]
i.e. the values $\
{(2,\frac{1}{3})}$.
\item In the second iteration $\V_{\game^C}(s_1,\theta,2)$ are the SWNE values of the bimatrix game:
\[
\mgame_1 =\bordermatrix{ ~ & \bot \cr
c & 2 \cr
s & \frac{1}{3}  \cr
}
\quad \mbox{and} \quad
\mgame_2 =\bordermatrix{ ~ & \bot \cr
c & \frac{1}{3} \cr
s & 1  \cr
}
\]
i.e. the values $(2,\frac{1}{3})$, and $\V_{\game^C}(s_2,\theta,2)$ are the SWNE values of the bimatrix games:
\[
\mgame_1 =\bordermatrix{ ~ & c & s \cr
\bot & \frac{1}{3} & 2 \cr
}
\quad \mbox{and} \quad
\mgame_2 = \bordermatrix{ ~ & c & s \cr
\bot & 1 & \frac{1}{3} \cr
}
\]
i.e. the values $(\frac{1}{3},1)$.
\item 
In the third iteration $\V_{\game^C}(s_1,\theta,3)$ are the SWNE values of the bimatrix game:
\[
\mgame_1 =\bordermatrix{ ~ & \bot \cr
c & \frac{1}{3}  \cr
s & \frac{1}{3}  \cr
}
\quad \mbox{and} \quad
\mgame_2 =\bordermatrix{ ~ & \bot \cr
c & 1 \cr
s & 1  \cr
}
\]
i.e. the values $(\frac{1}{3},1)$, and $\V_{\game^C}(s_2,\theta,3)$ are the SWNE values of the bimatrix game:
\[
\mgame_1 =\bordermatrix{ ~ & c & s \cr
\bot & 2 & 2 \cr
}
\quad \mbox{and} \quad
\mgame_2 = \bordermatrix{ ~ & c & s \cr
\bot & \frac{1}{3} & \frac{1}{3} \cr
}
\]
i.e. the values $(2,\frac{1}{3})$. 
\item
In the fourth iteration $\V_{\game^C}(s_1,\theta,4)$ are the SWNE values of the bimatrix game:
\[
\mgame_1 =\bordermatrix{ ~ & \bot \cr
c & 2  \cr
s & \frac{1}{3}  \cr
}
\quad \mbox{and} \quad
\mgame_2 =\bordermatrix{ ~ & \bot \cr
c & \frac{1}{3} \cr
s & 1  \cr
}
\]
i.e. the values $(2,\frac{1}{3})$, and $\V_{\game^C}(s_2,\theta,4)$ are the SWNE values of the bimatrix game:
\[
\mgame_1 =\bordermatrix{ ~ & c & s \cr
\bot & \frac{1}{3}  & 2 \cr
}
\quad \mbox{and} \quad
\mgame_2 = \bordermatrix{ ~ & c & s \cr
\bot & 1 & \frac{1}{3} \cr
}
\]
i.e. the values $(\frac{1}{3},1)$.
\end{itemize}
As can be seen the values computed during value iteration  are oscillating both for state $s_1$ and $s_2$.

}

\end{document}